\newtheorem{theorem}{Theorem}[section]
\newtheorem{lemma}[theorem]{Lemma}
\newenvironment{proof}[1][Proof]{\begin{trivlist}
\item[\hskip \labelsep {\bfseries #1}]}{\end{trivlist}}
\newcommand{\E}{{\rm E}}
\newcommand{\V}{{\rm Var}}
\newcommand{\bbR}{\mathbb{R}}
\newcommand{\half}{\frac{1}{2}}
\newcommand{\bea}{\begin{eqnarray*}}
\newcommand{\eea}{\end{eqnarray*}}
\newcommand{\bean}{\begin{eqnarray}}
\newcommand{\eean}{\end{eqnarray}}
\newcommand{\baln}{\begin{align}}
\newcommand{\ealn}{\end{align}}
\begin{document}

\title{Inference for Differential Equation Models using Relaxation via Dynamical Systems}
\author[1]{Kyoungjae Lee}
\author[2]{Jaeyong Lee}
\author[3]{Sarat Dass}
\affil[1]{Department of Applied and Computational Mathematics and Statistics, The University of Notre Dame}
\affil[2]{Department of Statistics, Seoul National University}
\affil[3]{Department of Fundamental and Applied Sciences, Universiti Teknologi PETRONAS}

\maketitle

\begin{abstract}
Statistical regression models whose mean functions are represented by ordinary differential equations (ODEs) can be used to describe phenomenons  dynamical in nature, which are abundant in areas such as biology, climatology and genetics.
The estimation of parameters of ODE based models is essential for understanding its dynamics,  but the lack of an analytical solution of the ODE makes  the parameter estimation challenging. 
	The aim of this paper is to propose a general and fast framework of statistical inference for ODE based models by relaxation of the underlying ODE system. Relaxation is achieved by a properly chosen numerical procedure, such as the Runge-Kutta, and by introducing additive Gaussian noises  with small variances.  Consequently, filtering methods can be applied to obtain the posterior distribution of the parameters in the Bayesian framework.  The main advantage of the proposed method is computation speed. In a simulation study, the proposed method was at least 14 times faster than the other methods. 
Theoretical results which guarantee the convergence of the posterior of the approximated dynamical system to the posterior of true model  are  presented. Explicit expressions are given that relate the order and the mesh size of the Runge-Kutta procedure to  the rate of convergence of the approximated posterior as a function of sample size.

\noindent {\em Key words}: Ordinary differential equation, Dynamic model, Runge-Kutta Method, Extended Liu and West filter
\end{abstract}

\section{Introduction}
Many dynamical phenomenons in the real world can be represented mathematically by ordinary differential equations (ODEs). Common examples include Newton's law of cooling, Lotka-Volterra equations for predator-prey populations (Alligood et al., 1997)\nocite{Alligood97} and Lorenz equation for atmospheric convection (Lorenz, 1963)\nocite{lorenz63}.
There are many other popular examples describing physical, chemical and biological phenomenons using ODEs.
Although observing the data sets from an ODE systems is common, estimating the parameters of ODE models (ODEMs) can be challenging because of  lack of an analytical solution to ODE. Here, we give a brief review of  previous works on the ODEMs.

There are several frequentist methods in the literature for parameter estimation of ODEMs. Bard (1974)\nocite{BardYonathan74} used numerical integration to approximate the solution of ODEs and minimized the objective function based on a gradient method.
Varah (1982)\nocite{Varah82} suggested a two step estimation method using the cubic spline approximation. The two steps consist of estimation of the regression function and estimation of the parameters in the ODEM. Ramsay and Silverman (2005)\nocite{RamsaySilverman05} modified the first step of Varah by adding the roughness penalty function which measures the difference between the ODE and the mean function.
The parameter cascading method was proposed by Ramsay et al. (2007)\nocite{RamsayHooker07}. They grouped the parameters into the regression coefficients, structural parameters, and regularization parameters. The parameters in each group are estimated in turn in a cascading fashion.

Bayesian inference of ODEMs is more challenging because naive application of Markov Chain Monte Carlo (MCMC) methods  would require calculation of  the numerical solution of ODE whenever parameters are sampled from the proposal distribution.
Gelman et al. (1996)\nocite{Gelman96} and Huang et al. (2006)\nocite{HuangLiu06} proposed a Bayesian computation  method for parameter inference of pharmacokinetic models and the longitudinal HIV dynamic system, respectively.
Campbell (2007)\nocite{Campbell07} combined the parallel tempering (Geyer, 1991)\nocite{Geyer92} and collocation method (Ramsay et al., 2007) to get over the rough surface of the posterior, but this slows down the speed of  computations significantly. Arnold et al. (2013)\nocite{arnold2013linear} used particle filter framework for the inference of ODEMs with linear multistep methods for the numerical integration.  
Dass et al. (2017)\nocite{dass2017laplace} suggested  a Bayesian inference with Laplace approximation for a fast computation when the dimension of $\theta$ is moderate.

In this paper, we propose a Bayesian inference method for the ODEMs using a relaxation technique via dynamical systems and associated dynamic models. Relaxation is achieved by a properly chosen numerical procedure, such as the Runge-Kutta, and by introducing additive Gaussian noise variables with variance tending to zero. The variance of the additive noise variables works as a measure of fidelity to the original ODEM and by letting it tend to zero, we recover the original model. The relaxation introduces inefficiency of the inference, but we gain the speed of the computation in return. 

For a fast computation, a filtering method is applied for inferring posterior distributions of parameters in a Bayesian framework. The relaxation technique provides a dynamical system and model to which a fast inference tool based on sequential Monte Carlo can be applied to.  With these sequential methods, we do not need to calculate the whole path of the numerical solution for each realization of the new parameter. It reduces the computation time significantly compared to other standard Bayesian procedures and enables us  to deal with the ODEM   in reasonable computing time. In subsection \ref{subsection:LV}, to emphasize its fast computation the proposed method is compared with the other methods: the parameter cascading, the delayed rejection adaptive Metropolis algorithm  and the Bayesian inference with the Laplace approximation. In the simulation study, the proposed method is from 14 times to 78 times faster than other methods.

We also derive convergence results for the approximated posteriors under suitable regularity conditions. We present a guideline for the choice of the model parameters which give a reasonable relative error rate, and provide its theoretical basis. Theoretical results which guarantee the convergence of the posterior of the approximated dynamical system to the posterior of true model are presented. Explicit expressions  are given that relate the order and the mesh size of the Runge-Kutta procedure and  guarantee the rate of convergence of the approximated posterior to the true posterior.

%Lastly, the error term $\eta_i$ of model \eqref{NDmodel} can improve the performance of the analysis.
%By introducing the error term on numerical approximation, we can obtain a more stable result which is relatively flexible to accuracy of the approximation.
%Chkrebtii et al. (2013)\nocite{Chkrebtii13}, Huttunen and Kaipio (2007)\nocite{Huttunen07} argued the importance of error term for taking into account the model discrepancy.

The rest of the paper is organized as follows.
In section \ref{sec:odems}, we describe a differential equation model and its corresponding relaxed dynamic model counterpart as well as prior choices.
The method of posterior inference is described in section \ref{section:ELW}. Some theoretical support for the proposed method are given in section \ref{section:Conv}. In section \ref{section:examples}, we give two simulated data examples to  demonstrate  the speed and  performance  of the proposed method. A real data set, the Lynx-Hare data set, is analyzed in section \ref{sec:realdata}. The discussion is given in section \ref{sec:disc}. The proofs of theorems are given in the appendix.

%Lastly, the error term $\eta_i$ of model \eqref{NDmodel} can improve the performance of the analysis.
%By introducing the error term on numerical approximation, we can obtain a more stable result which is relatively flexible to accuracy of the approximation.
%Chkrebtii et al. (2013)\nocite{Chkrebtii13}, Huttunen and Kaipio (2007)\nocite{Huttunen07} argued the importance of error term for taking into account the model discrepancy.

%%%%%%%%%%%%%%%%%%%%%%%%%%%%%%%%%%%%%%%%%%%%%%%%%%%%%%%%
% 2
\section{Ordinary Differential Equation Models and Nonlinear Dynamic Models}\label{sec:odems}
\subsection{Ordinary Differential Equation Models (ODEMs)}
The ODEM is the regression model with regression function $x(t)$ described by an ODE. The regression function $x(t)$ is the solution of the differential equation 
\bean\label{dotx}
\dot{x}(t) = f(x, u, t ; \theta),
\eean
where $f$ is a $p$-dimensional smooth function,  $u(t)$ is a deterministic input function, $\theta \in \Theta \subset \bbR^q$ is the unknown parameter, and $\dot{x}(t)$ denotes the first derivative of $x(t)$ with respect to time $t$. Since the input function $u(t)$ does not affect the general ideas of inference in this paper, it is not considered subsequently. The data are observed at $n$ points in the time interval $t \in [0, T] \subset \Bbb{R}$, given by $0\le t_1,t_2,\ldots, t_n \le T$. Thus,
\bea
y_i = x(t_i) + \epsilon_i,~~ i=1,\ldots,n,
\eea
where $y_i$ is a $p$-dimensional observation vector at time $t_i$,  the error $\epsilon_i$ is drawn independently from the multivariate normal distribution $N_p(0, \sigma^2I_p)$ with unknown $\sigma^2 > 0 $, and $x(t_i)\equiv x_i$ is the underlying regression function measured at time $t_i$.

The regression model is given by 
\bean \label{ODE}
\begin{split}
y_i &= x_i + \epsilon_i ,~~ i=1,\ldots,n, \\
\dot{x}(t) &= f(x, u, t; \theta)
\end{split}
\eean 
where $x_i = x(t_i)$. The covariate $x_i$ is determined by the initial value of $x$, $x_0 = x(0)$, and the parameter $\theta$. In the rest of the paper, we call the model \eqref{ODE} as the regression model or the true model.

In most cases, ODE (\ref{dotx}) does not have a closed form solution, so there is a need to approximate $x(t)$ numerically. We will use the Runge-Kutta method which is a standard numerical method for ODE. While there are many types of Runge-Kutta methods, we will only consider the 4th order method in this paper. However, our proposed method can be extended to the other approximation methods for ODE as well as other Runge-Kutta methods with different orders easily. Letting $h_{i+1} = t_{i+1} - t_i$, the form of 4th order Runge-Kutta approximation for \eqref{ODE} is as follows:
\bean\label{RK4}
x_{i+1} &\equiv& g(x_i, t_i ; \theta) = x_i + {1 \over 6}(k_{i1} + 2k_{i2} + 2k_{i3} + k_{i4}), ~~ i = 0, \ldots, n - 1 ,
\eean
where
\bea
k_{i1} &=& h_{i+1}f(x_i, t_i ; \theta) , \\ 
k_{i2} &=& h_{i+1}f(x_i + {1 \over 2}k_{i1} , t_i + {1\over2}h_{i+1} ;\theta), \\ 
k_{i3} &=& h_{i+1}f(x_i + {1\over2}k_{i2},t_i + {1\over2}h_{i+1} ;\theta), \\ 
k_{i4} &=& h_{i+1}f(x_i + k_{i3} , t_i + h_{i+1};\theta). 
\eea
In the above equation, all $x_i$'s indicate the approximated values. 
For more details, see Spijker (1996).\nocite{Spijker96}

With this approximation, we have the following model
\bean \label{ODEmodel}
\begin{split}
y_i &= x_i + \epsilon_i  , ~~ i = 1 , \ldots, n, \\
{x}_{i+1} &= g(x_{i} , t_{i} ; \theta) , ~~ i = 0 , \ldots, n-1.
\end{split}
\eean 
In the remainder of this paper, we call the model \eqref{ODEmodel} as a differential equation model (DEM).
Sometimes to obtain better approximation of $x_{i+1}$, we divide the interval $[t_{i-1}, t_i]$ into $m$ small  subintervals and apply the Runge-Kutta method for the subintervals. In this case, we will call the corresponding ODE model the $m$ step ODE model and $m$ the step size.

\subsection{Nonlinear Dynamic Models}
In practice, estimating the parameter from DEM can pose a significant computational  challenge if the ODE does not have an analytical solution. Dass et al. (2017) marginalized out $x_0$ using Laplace approximation and conducted grid sampling to get posterior samples of $\theta$.  Their method is fast and accurate when the dimension of $\theta$ is small; however, the methodology suffers from heavy computations when the dimension of $\theta$ is large. The computation time increases  exponentially as the dimension of $\theta$ increases due to the grid sampling. 
The griddy Gibbs sampler can be used on $\theta$, but practical problems such as dependencies and slow convergence may arise.

%In practice, the differential equation model \eqref{ODEmodel} can have the two main problems. 
%First, it can ignore the model discrepancy caused by the approximation of the true $x_i$. The approximation error may give rise to a serious problem in some cases. Especially, when the step size $h_i$ is too large to give an accurate approximation or when the true model $x(t)$ has a sudden fluctuation, the approximation does not give us the true trend. 

%Second, the posterior of \eqref{ODEmodel} does not belong to a well-known distribution class. Dass et al. (2015+) marginalized out $x_0$ using Laplace approximation and conducted the grid sampling method to get the posterior sample of $\theta$. However, it suffers from a heavy computation when the dimension of $\theta$ is large because of the exponentially many computations to evaluate the grid probabilities. 
%Giddy Gibbs sampling can be used, but there are some practical problems such as dependency or convergent problem. 

In this paper, in order to make posterior inference on $\theta$, we adopt a  nonlinear dynamic model relaxation of the DEM in (\ref{ODEmodel}) given in terms of the model below with unknown initial condition $x_0$:
\bean\label{NDmodel}
\begin{split}
y_i &= \tilde{x}_i + \epsilon_i  , ~~ i = 1 , \ldots, n ,\\
\tilde{x}_{i+1} &= g(\tilde{x}_{i}, t_{i} ; \theta) + \eta_{i} , ~~ i = 0, \ldots, n-1
\end{split}
\eean
where $\epsilon_i \overset{iid}\sim N(0, \sigma^2I_p)$ and $\eta_i \overset{iid}\sim N(0, u^2I_p)$ with $\sigma, u >0$. 
The error term $\eta_i$ reflects the fact that the approximation $g(x_{i},t_{i};\theta)$ of $x_{i+1}$ is made with uncertainty.
In the remainder of the paper, we call model \eqref{NDmodel} as  the approximate dynamic model obtained as a relaxation of the DEM in (\ref{ODEmodel}) via the relaxation parameter $u$.  The quantities $\tilde{x}_i$ in \eqref{NDmodel} are not the same as $x_{i}$ given in \eqref{ODEmodel} since the former are quantities that are observed with error whereas the latter are not. However, note that the two models \eqref{ODEmodel} and \eqref{NDmodel} become equivalent as the relaxation parameter $u \to 0$.

In the above model \eqref{NDmodel}, there are four unknown quantities, namely, $x_0,\theta, \lambda = 1/\sigma^2$ and $u$. 
The Bayesian approach proceeds by considering priors for these quantities. 
We do not consider a prior for the relaxation parameter $u$ since it is artificially introduced to control the quality of the approximation. We fix $u$ to be a small positive quantity in the subsequent numerical computations.
The priors on $x_0$ and $\lambda$ are taken as
\bean\label{priors}
\begin{split}
x_0 | \lambda &\sim N_p(\mu_{x_0}, c\lambda^{-1}I_p) \,\mbox{ and }\\
\lambda  &\sim \text{Gamma}(a_\lambda, b_\lambda) , \\
\end{split}
\eean
where $c>0$ and $\text{Gamma}(a, b)$ represents the Gamma distribution with mean $a/b$ and variance $a/b^2$. 
The prior for $\theta$, $\pi(\theta)$, is taken independently of the rest of the unknown quantities above.

\subsection{Sequential Monte Carlo}
Sequential Monte Carlo (SMC) is a simulation-based method for estimating the states and the parameters of the nonlinear dynamic model. The basic idea of SMC is using the importance samples to approximate posterior at each state and updating the samples sequentially through a proper kernel. There exists an extensive literature on SMC which includes sequential importance sampling (Handschin and Mayne, 1969\nocite{handschin1969monte}), bootstrap filter (Gordon et al., 1993\nocite{Gordon93}), auxiliary particle filter (Pitt and Shephard, 1999\nocite{Pitt99}), Rao-Blackwellised particle filter (Doucet et al., 2000)\nocite{doucet2000rao}, sequential Monte Carlo sampler (Del Moral et al., 2006)\nocite{del2006sequential}, Liu and West filter (Liu and West, 2001\nocite{Liu01}), particle learning (Carvalho et al., 2010\nocite{Carvalho10}), multilevel sequential Monte Carlo sampler (Beskos et al., 2016)\nocite{beskos2016multilevel}, to name just a few. For an extensive review of SMC, see Doucet et al. (2001)\nocite{doucet2001introduction}, Kantas et al. (2009)\nocite{kantas2009overview}, Lopes and Tsay (2011)\nocite{lopes2011particle} or S{\"a}rkk{\"a} (2013)\nocite{sarkka2013bayesian}.

The SMC has advantages over other alternative posterior computation methods such as Kalman filter, extended Kalman filter and Markov chain Monte Carlo (MCMC). 
The Kalman filter and the extended Kalman filter are applicable to the linear dynamic model, while the SMC can be applied to the nonlinear dynamic model as well. 
The SMC has advantages over MCMC. First, SMC methods are much faster than MCMC methods. Whenever the new parameter is propagated in each stage of SMC, we only calculate the next step of the numerical solution. 
Fast computation is the  biggest advantage of our method. Second, they are able to be implemented in an on-line learning scenario. When a new data point is observed,  SMC just need to update one step of the algorithm, while MCMC must implement the whole algorithm again to get the new posterior samples. 
Due to these advantages, we choose SMC for the posterior computation of the nonlinear dynamic model, which approximates the ODE model. 

%\bcut
%On the other hand, SMC methods suffer from the absence of the uniform convergence result. For any bounded function $\phi$ which maps $x_{1:n}$ to $\bbR$, 
%\bean\label{SMCerror} 
%\E \left[ \left| \int \phi(x_{1:n}) ( \hat{\pi}_n(dx_{1:n}\mid y_{1:n}) - \pi_n(dx_{1:n} \mid y_{1:n}) ) \right|^p \right]^{1/p} \le \frac{C_n \|\phi\|_\infty }{N^{1/2}}
%\eean
%where $x_{1:n}\equiv (x_1,\ldots,x_n)$, $y_{1:n}\equiv (y_1,\ldots,y_n)$, $\|\phi\|_\infty \equiv \sup_{x_{0:n}}|\phi(x_{1:n})|$, $\hat{\pi}_n$ is the estimated target distribution from SMC and $N$ is the number of particles (Kantas et al., 2009)\nocite{kantas09}. In \eqref{SMCerror}, $C_n$ may increase exponentially or polynomially with $n$. Thus, one cannot have fixed precision target distribution when the number of observations $n$ grows. Even for fixed $n$, to determine the enough number of particles $N$ is non-trivial work.
%In summary, we can get a fast and less accurate inference result by adopting the SMC methods. 
%\ecut

%%%%%%%%%%%%%%%%%%%%%%%%%%%%%%%%%%%%%%%%%%%%%%%%%%%%%%%%
% 3
\section{Posterior Computations for the Approximate Dynamic Model via  Sequential Monte Carlo}\label{section:ELW}

To obtain inference for $\theta$ based on the approximated dynamic model of (\ref{NDmodel}), we will use the extended Liu and West (ELW) filter to estimate parameters and states (Rios and Lopes, 2013).\nocite{Rios13}  We call the proposed method of computation  relaxed DEM with ELW filter (RDEM-ELW)  or simply RDEM. 
The ELW filter uses the idea of auxiliary particle filter to sample the states, and it divides the parameters into two sets, $\theta$ and $\gamma$, representing parameters with and without sufficient statistic, respectively. The parameters denoted by $\theta$ (i.e., without the sufficient statistic) is the same set of parameters denoted by $\theta$ in (\ref{NDmodel}).  
For the $\theta$-set, the ELW filter introduces artificial random errors onto the static parameter $\theta$, thus converting and combining it with the other evolving parameters which are the states $x_{i}$ (see Liu and West, 2001)\nocite{Liu01}. Furthermore, in the ELW filter, the marginal posterior of $\theta$ at each time point is approximated by a finite mixture of normal distributions. The mean and variance of the evolution distribution are determined so that the mixture of normals does not increase the posterior variance. 
%...employs the normal mixture approximation for the posterior distribution:
%$$p(\theta \mid y_{1:t}) \approx \sum_{i=1}^N N_q (\theta ; m_t^{(i)}, h^2 V_t ) .$$
%Here, $q$ is the dimension of $\theta$, $h^2= 1-a^2$ and
%\bea
%m_t^{(i)} &=& a \theta_t^{(i)} + (1-a) \bar{\theta}_t \\
%V_t &=& \widehat{\V}(\theta \mid y_{1:t}) =  \sum_{i=1}^N w_t^{(i)} ( \theta_t^{(i)} - \bar{\theta}_t)( \theta_t^{(i)} - \bar{\theta}_t)^T \\
%\bar{\theta}_t &=&  \sum_{i=1}^N w_t^{(i)} \theta_t^{(i)}
%\eea
%where $w_t^{(i)}$ is weight of the particle $\theta_t^{(i)}$.
For the posterior update of the $\gamma$-set  of parameters, the idea of Storvik (2002)\nocite{storvik2002particle} and Fearnhead (2002)\nocite{fearnhead2002markov} is used. For the idea of ELW to be successfully applied, the posterior of $\gamma$, $p(\gamma \mid y_{1:i}, x_{0:i},\theta),\,\, i=1,\ldots,n$, needs to be tractable, that is from which samples can be drawn directly. 
In particular, we assume $p(\gamma \mid y_{1:i}, x_{0:i},\theta)$ depends on a sufficient statistic $s_i = s_i(y_{1:i}, x_{0:i},\theta)$.

Incorporating the evolution of $\theta$ into  (\ref{NDmodel}) according to the ELW methodology creates a further relaxation of the former model. The ELW model for the approximate dynamical model in (\ref{NDmodel}) is given by 
\begin{eqnarray}
\label{ELW1} y_i &\sim& N(x_i, \sigma^2 I_p), \\
\label{ELW2} x_i &\sim& N(g(x_{i-1}, t_i; \theta_i), u^2 I_p), ~~\mbox{and}\\
\label{ELW3} \theta_i &\sim& N(a\theta_{i-1} + (1-a)\bar{\theta}_{i-1},\tilde{h}^{2}V_{i}),
\end{eqnarray}
for $i=1,2,\cdots,n$ with $\theta_{0}\sim \pi_{\theta}$ and $x_{0}$ distributed according to its prior specification in (\ref{priors}).  In (\ref{ELW2}), $g$ is as defined in \eqref{RK4}, and $u$ is a small fixed positive real number representing the relaxation parameter. In (\ref{ELW3}),  $\bar{\theta}_{i-1}$  represents the posterior mean of $\theta$ given $y_{1:i-1}$ at time $i-1$, $a = (1-\tilde{h}^{2})^{1/2}$ where $\tilde{h}^{2} = 1 - ((3\delta - 1)/(2\delta))^2$, $\delta$ is a discounting factor usually taken to be a high value such as $0.95$ or $0.99$, and $V_{i}$ is the covariance matrix corresponding to the evolution equation of $\theta_{i}$. Equation (\ref{ELW3}) is the further relaxation and evolution model for $\theta$ prescribed by the ELW methodology (see Liu and West, 2001). The selection of the parameters $a$ and $\tilde{h}$ guarantees that the posterior variance of $\theta_i$ remains stable (i.e., does not increase) with the progression of the time index $i$.

Several posterior distributions will be needed for the subsequent discussion and we derive their forms here. Consider $\gamma = \lambda = \sigma^{-2}$, the inverse of the variance of  observation error. ELW methodology requires the distribution $p(\gamma\,|\,y_{1:i},x_{0:i},\theta)$ be tractable and easily sampled from. In our  case, the posterior distribution for $\gamma$, conditional on observations $y_{1:i}$, states $x_{0:i}$ and $\theta$, is given by
\begin{equation}
\label{gammaposterior}
\pi(\gamma \mid y_{1:i}, x_{0:i}, \theta)
= Gamma \left( a_\lambda + \frac{(i+1)p}{2}, \,\, b_\lambda + \frac{1}{2}\left(  \frac{\| x_0 - \mu_{x_0}\|^2}{c} + \sum_{k=1}^i \|y_k-x_k\|^2 \right) \right)
\end{equation}
which is a tractable distribution. Note also from the above equation that the distribution of $\gamma$ depends on $y_{1:i}$ and $x_{0:i}$ through the sufficient statistic $s_i= s_i(y_{1:i}, x_{0:i},\theta) = (a_\lambda + (i+1)p/2, b_\lambda + (\| x_0 - \mu_{x_0}\|^2/c + \sum_{k=1}^i \|y_k-x_k\|^2)/2 )$, where $a_{\lambda}, b_{\lambda}, c$ and $\mu_{x_0}$ are all fixed and known hyperparameters (see (\ref{priors})). Next, the two distributions, that is (i) the conditional distribution of $x_{i}$ given $x_{i-1}$, $y_{i}$, $\theta_{i}$ and $\gamma$, and (ii) the marginal distribution of $y_{i}$ given $x_{i-1}$, $\theta_{i}$ and $\gamma$, can be obtained by considering the joint density of $x_{i}$ and $y_{i}$, conditional on $x_{i-1}$, $\theta_{i}$ and $\gamma$, from (\ref{ELW1}) and (\ref{ELW2}). From these two equations, it follows that $(x_{i},y_{i})$ is jointly normal, and thus, the conditional density of $x_{i}$ given $y_{i}$ is 
\begin{equation}
\label{xigivenyi}
p(x_{i}\,|\,x_{i-1},y_{i},\theta_{i},\gamma) = N\left(\,\frac{y_{i}/\sigma^{2} + g(x_{i-1},t_{i},\theta_i)/u^2}{1/\sigma^{2} + 1/u^2}\,,\,\frac{1}{1/\sigma^{2} + 1/u^2}I_{p}\, \right),
\end{equation}
whereas the marginal distribution of $y_{i}$ given $x_{i-1}, \theta_{i}$ and $\gamma$, obtained by integrating out $x_{i}$, is given by
\begin{equation}
\label{marginalyi}
p(y_{i}\,|\,x_{i-1},\theta_{i},\gamma) = N\left(\, g(x_{i-1},t_{i},\theta_i),\,(\sigma^{2} + u^2)\,I_{p}\, \right).
\end{equation}

%When we deals with a random variable $u^2$, $\gamma$ is  $(\sigma^2, u^2)$. The posterior distributions for $\lambda$ and $u^{-2}$ are given by
%\bea
%\pi(\lambda \mid y_{1:i}, x_{0:i}, \theta)
%&=& Gamma \left( a_\lambda + \frac{ip}{2}, \,\, b_\lambda + \frac{1}{2}  \sum_{k=1}^i \|y_k-x_k\|^2  \right), \\
%\pi(u^{-2} \mid y_{1:i}, x_{0:i}, \theta ) &=& Gamma \left( a_u + \frac{(i+1)p}{2}, \,\, b_u + \frac{1}{2}\left(  \| x_0 - \mu_{x_0}\|^2  + \sum_{k=1}^i \|x_k - g(x_{k-1}, t_k; \theta_k) \|^2 \right)  \right),
%\eea
%where the sufficient statistics for $\lambda$ and $u^{-2}$ are $s_i^\lambda = (a_\lambda + ip/2, b_\lambda +  \sum_{k=1}^i \|y_k-x_k\|^2/2 )$ and $s_i^u = (a_u + (i+1)p/2, b_u +  (\| x_0 - \mu_{x_0}\|^2  + \sum_{k=1}^i \|x_k - g(x_{k-1}, t_k; \theta_k) \|^2)/2  )$, respectively.
%\ech

We now give the ELW algorithm for obtaining inference for $\theta$ based on the approximate dynamic model (\ref{NDmodel}) and the posteriors defined above. Let the notation $[A,B,\cdots\,|\,C,D,\cdots]$ denote the conditional density of random entities (either scalars or vectors) $A, B, \cdots$ conditional on either random or fixed constant entities $C, D,\cdots$. The ELW model of (\ref{ELW1})-(\ref{ELW3}) can be written based on this notation as
\begin{eqnarray}
\label{ELW1a} y_{i+1} &\sim& [\,y_{i+1}\,|\,x_{i+1}, \gamma\,], \\
\label{ELW2a} x_{i+1} &\sim& [\,x_{i+1}\,|\,x_{i},\theta_{i+1}\,], ~~\mbox{and}\\
\label{ELW3a} \theta_{i+1} &\sim& [\,\theta_{i+1}\,|\,\theta_{i},\,y_{1:i}\,].
\end{eqnarray}

Equation (\ref{ELW1a})-(\ref{ELW3a}) gives the joint distribution of $(y_{i+1},x_{i+1},\theta_{i+1})$ conditional on the observations, states and $\theta$-values at previous time points, that is,
\begin{displaymath}
[\,y_{i+1},\,x_{i+1},\,\theta_{i+1}\,|\,x_{i},\,\theta_{i},\,y_{1:{i}},\gamma\,] = [\,y_{i+1}\,|\,x_{i+1}, \gamma\,]\cdot [\,x_{i+1}\,|\,x_{i},\theta_{i+1}\,]\cdot [\,\theta_{i+1}\,|\,\theta_{i},\,y_{1:i}\,]
\end{displaymath}
based on (\ref{ELW1a})-(\ref{ELW3a}). The auxiliary particle filter (APF) technique rewrites this joint density as
\begin{equation}
\label{APF}
[\,y_{i+1},\,x_{i+1},\,\theta_{i+1}\,|\,x_{i},\,\theta_{i},\,y_{1:{i}},\gamma\,] \\ 
%&=& [\,x_{i}\,|\,x_{i-1},\,\theta_{i},\,y_{i},\,\gamma\,]\cdot[\,y_{i}\,|\,x_{i-1},\theta_{i},\gamma\,]\cdot[\,\theta_{i}\,|\,\theta_{i-1},y_{1:{i-1}}\,]\\
=
{[\,x_{i+1}\,|\,x_{i},\theta_{i+1},\,y_{i+1},\,\gamma\,]}\cdot[\,y_{i+1}\,|\,x_{i},\theta_{i+1},\gamma\,]\cdot[\,\theta_{i+1}\,|\,\theta_{i},y_{1:{i}}\,].
\end{equation}
The first term on the right hand side of (\ref{APF}) is given by (\ref{xigivenyi}), thus available in closed form for sampling in our examples.  The second term on the right hand side of  (\ref{APF}) is given by (\ref{marginalyi}), which is again available in closed form for {\it evaluation} in our examples. The third term in (\ref{APF}) is the Liu and West filter for $\theta$ given by (\ref{ELW3a}), which can be easily sampled from. We give our sampling methodology to sample from the posteriors using sequential Monte Carlo. Suppose $\{x_{i}^{(j)},\,\theta_{i}^{(j)},\, \gamma_{i}^{(j)},\,s_{i}^{(j)}\}$ for $j=1,2,\cdots,N$ are $N$ samples from the posterior  $\,[x_{i},\,\theta_{i},\, \gamma_{i},\,s_{i}\,|\,y_{1:i}\,]$. The subscript $i$ on $\gamma_{i}$ does not imply any evolution equation for $\gamma$. It just denotes the random variable $\gamma$ for marginal realizations of $\gamma$ from  the posterior $[\gamma\,|\, s_i]$. Similarly, $s_{i}$ denotes realizations of the sufficient statistic at time point $i$ based on its functional equation, namely, $\mathcal{S}(y_{1:i},x_{0:i},\theta_{i})$ when $x_{0:i}$ and $\theta_{i}$ are samples from the posterior $[x_{0:i},\theta_{i}\,|\,y_{1:i}]$.

The steps of our sampling algorithm is as follows:
\begin{itemize}
	\item First, sample $\theta_{i+1}^{(j)} \sim [\theta_{i+1}\,|\,\theta_{i}^{(j)},\,y_{1:i}] $ according to (\ref{ELW3}) for $j=1,2,\cdots,N$.
	\item Compute weights $w_{i}^{(j)} \propto [\,y_{i+1}\,|\,x_{i}^{(j)},\,\theta_{i+1}^{(j)},\,\gamma_{i}^{(j)}\,]$ for $j=1,2,\cdots,N$.
	\item Obtain $N$ resamples $\{\,\tilde{x}_{i}^{(j)},\,\tilde{\theta}_{i+1}^{(j)},\,\tilde{\gamma}_{i}^{(j)},\,\tilde{s}_{i}^{(j)}\,\}_{j=1}^{N}$ by sampling from the collection $\{\,{x}_{i}^{(j)},\,{\theta}_{i+1}^{(j)},\,{\gamma}_{i}^{(j)},\,{s}_{i}^{(j)}\,\}_{j=1}^{N}$ according to the weights $\{\,w_{i}^{(j)}\,\}_{j=1}^{N}$. 
	\item Sample $\tilde{x}_{i+1}^{(j)} \sim [\,x_{i+1}\,|\,\tilde{x}_{i}^{(j)},\,\tilde{\theta}_{i+1}^{(j)},\,y_{i+1},\,\tilde{\gamma}_{i}^{(j)}\,]$ for $j=1,2,\cdots,N$.
	\item Compute $\tilde{s}_{i+1}^{(j)} = \mathcal{S}(\tilde{s}_{i}^{(j)},\,y_{i+1},\,\tilde{x}_{i+1}^{(j)},\,\tilde{\theta}_{i+1}^{(j)})$ for $j=1,2,\cdots,N$.
	\item Sample $\tilde{\gamma}_{i+1}^{(j)} \sim [\,\gamma\,|\,\tilde{s}_{i+1}^{(j)}]$ for $j=1,2,\cdots,N$.
\end{itemize}
Then, it follows that the $N$ samples $\{\tilde{x}_{i+1}^{(j)},\,\tilde{\theta}_{i+1}^{(j)},\, \tilde{\gamma}_{i+1}^{(j)},\,\tilde{s}_{i+1}^{(j)}\}$ for $j=1,2,\cdots,N$ are realizations from the posterior $\,[x_{i+1},\,\theta_{i+1},\, \gamma_{i+1},\,s_{i+1}\,|\,y_{1:i+1}\,]$. As the tuning parameter $\tilde{h} \rightarrow 0$, the posterior of $\theta$ at every time point $i$ from the approximate dynamic model becomes closer to the true posterior from the DEM.

As mentioned earlier, in the above algorithm, the subscripts $i$  on $\gamma_i$ and $s_i$ do not imply any kind of evolution over time. They just represent the update of the parameter and statistic, respectively, as new data become available. The tuning parameter $a$ determines the extent of shrinkage of the normal mixture through its mean. It also controls the smoothness through the variance term $\tilde{h}^2V_{i}$. It is usually prescribed to be chosen around the value $0.95$. The tuning parameter $a$ was fixed at $0.95$ throughout the rest of examples.  This corresponds to taking $\tilde{h}^{2} = 1-a^2 = 0.0975$ and $\delta = 1/(3-2a) = 0.909$. For the covariance matrix $V_i$, we chose $V_i = (N-1)^{-1} \sum_{j=1}^N (\theta_{i-1}^{(j)} - \bar{\theta}_{i-1}) (\theta_{i-1}^{(j)} - \bar{\theta}_{i-1})^T$.

The initial proposal density $q(x_0, \theta, \gamma)$ affects the performance of the algorithm. The proposal density which is concentrated around the true parameter has a better performance than the other proposal densities even with relatively small number of particles. In practice, we suggest that one run the ELW filter with initial particles $\theta^{(j)}$ and $\gamma^{(j)}$ from $\pi(\theta, \gamma)$ and rerun with the particles $\hat{\theta}^{(j)}$ and $\hat{\gamma}^{(j)}$ from the first inference. It is equivalent to consider the proposal density
$$q(x_0, \theta, \gamma) \equiv \pi(x_0) \times \pi(\theta ,\gamma \mid {\bf y}_n).$$
We call the resulting particles the refined particles.
It was used throughout the rest of examples.

%It is possible to use the ELW filter as a sampling procedure from the DEM, which is equivalent to the dynamic model with $u^2=0$ by Theorem \ref{convtoDE}. We only need to change the proposal kernel for $x_t$ and the importance weight. 
%If we concentrate on the DEM, the integral in weight,
%$$ \int p(y_t \mid x_t, \theta_{t}, \tau_{t-1}^{2}) p(x_t \mid x_{t-1},\theta_{t}, \tau_{t-1}^{2})   dx_t ,$$
%can be replaced by
%$$ p(y_t \mid g(x_{t-1}, t_{t-1} ; \theta_t), \tau_{t-1}^2) ,$$
%and the proposal kernel for $x_t$ can be replaced by 
%$\delta_{x_t}( g(x_{t-1}, t_{t-1}; \theta_t) )$ where $\delta_x(x_0)$ is a dirac measure whose probability is one on $x_0$. 
%With these weights and kernel, we can still use the ELW filter to get samples from the DEM.  

%%%%%%%%%%%%%%%%%%%%%%%%%%%%%%%%%%%%%%%%%%%%%%%%%%%%%%%%
% 4
\section{Convergence of the Posterior}\label{section:Conv}

\subsection{Convergence of the Posterior as the relaxiation parameter decreases}

%In this section, we want to show that the posterior of dynamic model converges to that of DEM as $u^2 \to 0$. Note that the full posterior of dynamic model is
%\bea
%\pi (dx_0, \ldots , dx_n , d\theta, d\lambda | {\bf y}_n, u^2) &\propto& \prod_{i=2}^n \pi( dx_i | x_{i-1}, \theta, u^2) \pi (dx_0 | \lambda) \pi(d\lambda) \pi(d\theta)\\
%&\times&  L(x_0, \ldots, x_n, \theta, \lambda ) \\
%&\equiv& \pi(dx_2, \ldots, dx_n | x_0, \theta, u^2) \pi(dx_0, d\theta, d\lambda) L(\Lambda)
%\eea
%where $\Lambda = (x_0, \ldots,  x_n, \theta, \lambda)$, $\pi(dx_0, d\theta, d\lambda) = \pi (dx_0 | \lambda) \pi(d\lambda) \pi(d\theta)$ and $\pi$ denote the measures.
%In the DEM, $x_i = g( x_{i-1}, t_{i-1}; \theta)$ for $i = 2,3, \ldots, n$ which implies 
%\bean\label{induc}
%x_{i+1} = g^{i}( x_0, t_{i} ; \theta), ~~ i = 1, \ldots, n-1
%\eean
%by the relation \eqref{RK4} where $g^{i}(x_0, t_{i} ;\theta) = g( g^{i-1}(x_0, t_{i-1};\theta), t_i ; \theta)$ is defined recursively.  
In this subsection, we show that as the relaxation parameter $u$ converges to $0$, the  posterior density of  $(x_0, \theta, \lambda)$ from the approximate dynamic model converges to the posterior from the DEM, i.e.
\bean\label{postofDM}
\pi(x_0, \theta, \lambda | {\bf y}_n , u^2) = \frac{\int L(\Lambda) \pi (dx_1, \ldots, dx_n | x_0, \theta, u^2) \pi(x_0, \theta, \lambda)}{\int \int L(\Lambda) \pi (dx_1, \ldots, dx_n | x_0, \theta, u^2) \pi(dx_0, d\theta, d\lambda)}
\eean
converges to
\bean\label{postofODE}
\pi (x_0, \theta, \lambda | {\bf y}_n) = \frac{L^*(x_0, \theta, \lambda)\pi(x_0, \theta, \lambda) }{\int L^*(x_0, \theta, \lambda) \pi(dx_0, d\theta, d\lambda)} 
\eean
as $u^2 \to 0$, where $\Lambda = (x_0, \ldots,  x_n, \theta, \lambda)$,
\bea
L(\Lambda) &=& (\lambda)^{{np}/{2}} \exp \left({-\frac{\lambda}{2}\cdot \sum_{i=1}^n \|y_i - x_i \|^2} \right) \text{ and } \\
L^*(x_0, \theta, \lambda) &=& (\lambda)^{{np}/{2}} \exp \left({-\frac{\lambda}{2}\cdot \sum_{i=1}^n \|y_i - g^{i}(x_0, t_{i-1} ; \theta) \|^2} \right)
\eea
with $g^{i}(x_0, t_{i-1} ;\theta) = g( g^{i-1}(x_0, t_{i-2};\theta), t_{i-1} ; \theta)$. Note that $\pi (x_0, \theta, \lambda | {\bf y}_n)$ is the  posterior of DEM.

% Main theorem 1
\begin{theorem}\label{convtoDE}
Consider model \eqref{NDmodel} and prior \eqref{priors}. Suppose $f( x, t ; \theta)$ is continuous in $x$. Then, the posterior density of the dynamic model \eqref{NDmodel} converges to that of the differential equation model \eqref{ODEmodel}, i.e.
$$\pi(x_0, \theta, \lambda | {\bf y}_n , u^2 ) \to \pi(x_0, \theta,\lambda | {\bf y}_n)$$
for all $x_0, \theta, \lambda$ as $u^2 \to 0$.
\end{theorem}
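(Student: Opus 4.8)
The plan is to rewrite the intractable inner integral in \eqref{postofDM} as an expectation over the ``noisy Runge--Kutta chain'' driving \eqref{NDmodel}, and then let the noise level $u$ shrink. Fix $(x_0,\theta,\lambda)$, let $Z_1,\ldots,Z_n$ be i.i.d.\ $N(0,I_p)$, and define $X_0^u=x_0$ and $X_{i+1}^u=g(X_i^u,t_i;\theta)+uZ_{i+1}$ for $i=0,\ldots,n-1$. By the evolution equation in \eqref{NDmodel}, the vector $(X_1^u,\ldots,X_n^u)$ has law $\pi(dx_1,\ldots,dx_n\mid x_0,\theta,u^2)$, so
$$\ell_u(x_0,\theta,\lambda):=\int L(\Lambda)\,\pi(dx_1,\ldots,dx_n\mid x_0,\theta,u^2)=\E\left[\lambda^{np/2}\exp\left(-\frac{\lambda}{2}\sum_{i=1}^n\|y_i-X_i^u\|^2\right)\right].$$
The numerator of \eqref{postofDM} is $\ell_u(x_0,\theta,\lambda)\,\pi(x_0,\theta,\lambda)$, and its denominator is $\int \ell_u(x_0,\theta,\lambda)\,\pi(dx_0,d\theta,d\lambda)$; I would treat each factor separately.

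First I would establish, by induction on $i$, that $X_i^u\to\xi_i:=g^{i}(x_0,t_{i-1};\theta)$ as $u\to0$, for every fixed realization of $Z_1,\ldots,Z_n$. Since $g(\cdot,t;\theta)$ is obtained from finitely many compositions and linear combinations of evaluations of $f(\cdot,t;\theta)$ (see \eqref{RK4}), continuity of $f$ in $x$ forces $g(\cdot,t;\theta)$ — and hence each iterate $g^{i}(\cdot,t;\theta)$ — to be continuous in $x$. For $i=1$, $X_1^u=g(x_0,t_0;\theta)+uZ_1=\xi_1+uZ_1\to\xi_1$; assuming $X_i^u\to\xi_i$, continuity gives $g(X_i^u,t_i;\theta)\to g(\xi_i,t_i;\theta)=\xi_{i+1}$, and adding $uZ_{i+1}\to0$ yields $X_{i+1}^u\to\xi_{i+1}$. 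Thus $(X_1^u,\ldots,X_n^u)\to(\xi_1,\ldots,\xi_n)$ almost surely, and since the integrand above is a continuous function of $(x_1,\ldots,x_n)$ bounded by $\lambda^{np/2}$ (a constant in those variables), the bounded convergence theorem gives $\ell_u(x_0,\theta,\lambda)\to \lambda^{np/2}\exp\!\big(-\tfrac{\lambda}{2}\sum_{i=1}^n\|y_i-\xi_i\|^2\big)=L^{*}(x_0,\theta,\lambda)$ for every $(x_0,\theta,\lambda)$. This handles the numerator after multiplying by the fixed prior density.

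For the denominator I would apply dominated convergence in the outer variables. Because $L(\Lambda)\le\lambda^{np/2}$ uniformly in the $x_i$'s, $\ell_u(x_0,\theta,\lambda)\le\lambda^{np/2}$ for every $u>0$, and this bound is integrable against the prior: using \eqref{priors} and the independence of $\pi(\theta)$,
$$\int \lambda^{np/2}\,\pi(dx_0,d\theta,d\lambda)=\int\lambda^{np/2}\,\pi(d\lambda)=\frac{\Gamma(a_\lambda+np/2)}{\Gamma(a_\lambda)\,b_\lambda^{np/2}}<\infty,$$
since $\pi(dx_0\mid\lambda)$ and $\pi(d\theta)$ are probability measures and the Gamma prior has finite moments of all orders. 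Hence $\int\ell_u\,\pi(dx_0,d\theta,d\lambda)\to\int L^{*}(x_0,\theta,\lambda)\,\pi(dx_0,d\theta,d\lambda)$, which is strictly positive because $L^{*}>0$ everywhere. Dividing the pointwise numerator limit by the strictly positive denominator limit gives $\pi(x_0,\theta,\lambda\mid{\bf y}_n,u^2)\to\pi(x_0,\theta,\lambda\mid{\bf y}_n)$ for all $(x_0,\theta,\lambda)$; as this holds along every sequence $u_k\to0$, the convergence holds as $u^2\to0$.

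There is no genuine obstacle here; the one thing that must be handled carefully is the interchange of limit and integral at the two levels — exhibiting the almost sure convergence $X_i^u\to\xi_i$, which uses only that $f$ is continuous in $x$ (exactly the stated hypothesis) and lets bounded convergence apply to the inner integral, and producing the single $u$-free dominating function $\lambda^{np/2}$ whose integrability against the conjugate Gamma prior controls the outer integral. Neither uniform continuity of $f$ nor any growth condition on $f$ is needed.
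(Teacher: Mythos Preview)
Your proof is correct and follows the same overall architecture as the paper's: show that the inner integral $\int L(\Lambda)\,\pi(dx_1,\ldots,dx_n\mid x_0,\theta,u^2)$ converges to $L^{*}(x_0,\theta,\lambda)$ pointwise, then pass the limit through the outer integral by dominated convergence with the single envelope $\lambda^{np/2}$, whose $\pi$-integrability comes from the Gamma prior.

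The one genuine difference is in how you establish convergence of the inner integral. The paper works at the level of distributions: it first proves (Lemma~\ref{inprob}, via moment generating functions and Cram\'er--Wold) that each $x_i\mid x_{i-1},\theta,u^2$ converges in probability to $g(x_{i-1},t_{i-1};\theta)$, then upgrades this by induction to joint convergence in probability of $(x_1,\ldots,x_n)\mid x_0,\theta,u^2$ (Lemma~\ref{x1xn}), and finally invokes the Portmanteau theorem on the bounded continuous integrand $\exp(-\tfrac{\lambda}{2}\sum\|y_i-x_i\|^2)$. You instead build an explicit coupling $X_{i+1}^u=g(X_i^u,t_i;\theta)+uZ_{i+1}$ on a common probability space and obtain \emph{almost sure} convergence $X_i^u\to\xi_i$ directly by induction and continuity of $g$; bounded convergence then applies immediately. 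Your route is more elementary---it bypasses mgf's, Cram\'er--Wold, and Portmanteau entirely---while the paper's route stays closer to the abstract state-space description of \eqref{NDmodel} without ever writing down a coupling. Both yield exactly the same conclusion, and your added remark that the limiting denominator is strictly positive (since $L^{*}>0$) is a small point the paper leaves implicit.
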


%%%%%%%%%%%%%%%%%%%%%%%%%%%%%%%%%%%%%%%%%%%%%%%%%%%%%%%%
% 
\subsection{Convergence of the Posterior as the step size increases}
We have shown that the posterior of the dynamic model \eqref{NDmodel} converges to that of the differential equation model \eqref{ODEmodel} as $u^2 \to 0$. In this subsection, we will prove that the posterior of the differential equation model converges to that of the true model. 

If the step size is $m$, each time interval $[t_{i-1},t_i]$ is divided into $m$ segments of length $(t_i - t_{i-1})/m$, and the Runge-Kutta method is applied to each subinterval to obtain $x_i's$.  To clarify the difference, let $x^m$ be the approximated solution of the differential equation by the fourth-order Runge-Kutta method with $m$ segments. Similarly, let $\pi_m$ and $\pi_{true}$  be the posterior distributions corresponding to $x^m$ and the true $x$, respectively. 
Note $x^m(t_1) = x(t_1)$ for all $m$. 
%Denote the likelihood of approximated $x$ with the number of segments $m$ as $L_m(x_0, \theta, \lambda)$, and let $L_{\text{true}}(x_0, \theta, \lambda)$ be the likelihood of true $x$. Similar to \eqref{postofODE}, the posterior of differential equation and the true posterior can be written by
%\bea
%\pi_m (x_0,\theta,\lambda | {\bf y}_n) = \frac{L_m(x_0,\theta,\lambda) \pi(x_0,\theta,\lambda)}{\int L_m(x_0,\theta,\lambda) \pi(dx_0,d\theta,d\lambda)}
%\eea
%and
%\bea
%\pi_{\text{true}} (x_0,\theta,\lambda | {\bf y}_n) = \frac{L_{\text{true}}(x_0,\theta,\lambda) \pi(x_0,\theta,\lambda)}{\int L_{\text{true}}(x_0,\theta,\lambda) \pi(dx_0,d\theta,d\lambda)}
%\eea
%respectively. 

% Main theorem 2
\begin{theorem}\label{convtoTrue}
Consider model \eqref{ODEmodel} and prior \eqref{priors}. Suppose $f( x, t ; \theta)$ satisfies Lipschitz condition in $x$, i.e. there exists the constant $K > 0$ such that
\begin{equation}\label{Lipschitz}
\| f(x, t ; \theta) -  f(x', t ; \theta) \| < K \| x - x' \| 
\end{equation}
for any $x, x' \in \bbR^p, t \in [T_0, T_1]$ and $\theta \in \Theta$. Then, the posterior density of the differential equation model \eqref{ODEmodel} converges that of the true model, i.e.
$$\pi_m (x_0,\theta,\lambda | {\bf y}_n) \to \pi_{\text{true}} (x_0,\theta,\lambda | {\bf y}_n)$$
for all $x_0, \theta, \lambda$ as $m \to \infty$.
\end{theorem}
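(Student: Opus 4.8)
The plan is to show that $\pi_m(x_0,\theta,\lambda\mid{\bf y}_n)\to\pi_{\text{true}}(x_0,\theta,\lambda\mid{\bf y}_n)$ by reducing the whole problem to a single pointwise-convergence statement about the likelihood. Writing $L_m^*(x_0,\theta,\lambda)$ for the likelihood built from the $m$-step Runge--Kutta flow $g_m^i$ and $L_{\text{true}}^*$ for the likelihood built from the exact flow $x(t_i)=x^i(x_0,t_{i-1};\theta)$, both posteriors have exactly the form \eqref{postofODE} with the same prior $\pi(x_0,\theta,\lambda)=\pi(x_0\mid\lambda)\pi(\lambda)\pi(\theta)$. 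So it suffices to prove (i) $L_m^*(x_0,\theta,\lambda)\to L_{\text{true}}^*(x_0,\theta,\lambda)$ for every fixed $(x_0,\theta,\lambda)$, and (ii) a dominated-convergence argument that transfers this pointwise convergence to convergence of the normalizing integrals $\int L_m^*\,\pi(d x_0,d\theta,d\lambda)\to\int L_{\text{true}}^*\,\pi(d x_0,d\theta,d\lambda)$. Given (i) and (ii), the ratio defining the posterior converges pointwise, and the proof is complete.

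For step (i), the key input is the classical global error bound for the fourth-order Runge--Kutta method: under the Lipschitz condition \eqref{Lipschitz} on $f$ (which guarantees existence, uniqueness, and well-posedness of the exact solution on $[t_0,t_n]$), the $m$-step RK4 approximation satisfies $\sup_{1\le i\le n}\|g_m^i(x_0,t_{i-1};\theta)-x^i(x_0,t_{i-1};\theta)\|\le C\,(1/m)^4$, where the constant $C$ depends on $K$, on the length of the time interval, and on bounds for the relevant derivatives of $f$ along the solution, but not on $m$. (One first bounds the one-step local truncation error on each subinterval of length $(t_i-t_{i-1})/m$, then propagates it through the discrete Grönwall inequality coming from the Lipschitz bound on $g$, exactly the standard argument; the paper references Spijker (1996) for these facts.) Hence $g_m^i\to x^i$ as $m\to\infty$, and since $L_m^*$ is a fixed continuous function of the finitely many vectors $(g_m^1,\dots,g_m^n)$ — namely $\lambda^{np/2}\exp(-\tfrac{\lambda}{2}\sum_i\|y_i-g_m^i\|^2)$ — pointwise convergence of the $g_m^i$ gives pointwise convergence of $L_m^*$ to $L_{\text{true}}^*$, establishing (i).

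For step (ii), I would dominate $L_m^*$ uniformly in $m$ by an integrable function of $(x_0,\theta,\lambda)$ against the prior. The factor $\lambda^{np/2}$ is harmless because $\int\lambda^{np/2}\pi(d\lambda)<\infty$ for the Gamma prior (and the Gaussian $\pi(x_0\mid\lambda)$ integrates to one in $x_0$). The exponential factor satisfies $\exp(-\tfrac{\lambda}{2}\sum_i\|y_i-g_m^i\|^2)\le 1$, so in fact $L_m^*(x_0,\theta,\lambda)\le\lambda^{np/2}$, which is already prior-integrable and independent of $m$; dominated convergence then yields convergence of both numerator and denominator integrals, and the denominator limit $\int L_{\text{true}}^*\,\pi(d x_0,d\theta,d\lambda)$ is positive (it is the normalizing constant of the true posterior, assumed proper). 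Dividing, $\pi_m\to\pi_{\text{true}}$ pointwise, as claimed.

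The main obstacle is making the global RK4 error bound genuinely uniform in $(x_0,\theta)$ rather than merely for fixed parameter values: the constant $C$ in the $O(m^{-4})$ bound involves higher derivatives of $f$ evaluated along the solution trajectory, so one needs either additional smoothness/boundedness hypotheses on $f$ and on the trajectories (e.g. $f$ and its derivatives bounded on the relevant compact set, or $\Theta$ compact and $x_0$ ranging in a bounded set), or one accepts only pointwise-in-$(x_0,\theta)$ convergence of $g_m^i$ — which, combined with the uniform domination $L_m^*\le\lambda^{np/2}$, is in fact all that step (ii) requires. So strictly speaking the Lipschitz condition alone suffices for the pointwise convergence claimed in the theorem; I would state the RK4 error estimate for fixed $(x_0,\theta)$, note that $1/m\to0$ makes the bound vanish, and let the (parameter-independent) dominating function $\lambda^{np/2}$ carry the integral through. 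If a rate in $n$ or a uniform statement is wanted, that is where the extra regularity assumptions would enter.
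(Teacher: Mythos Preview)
Your proposal is correct and follows essentially the same approach as the paper: pointwise convergence of the Runge--Kutta iterates (via the classical global error bound under the Lipschitz condition) yields $L_m\to L_{\text{true}}$ pointwise, and the uniform bound $L_m\le\lambda^{np/2}$, which is integrable under the Gamma prior, lets dominated convergence handle the normalizing constant. Your additional discussion of whether the RK4 error constant is uniform in $(x_0,\theta)$ is a nice clarification, and your conclusion that pointwise convergence suffices because the dominating function is parameter-independent is exactly right.
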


This result guarantees that the differential equation model works well with a reasonable segments parameter $m$ under the Lipschitz condition. 

\subsection{Choice of the relaxation parameter and the step size}\label{subsection:u2andm}
In practice, the choice of $u^2$ and $m$ can affect the performance of the approximation. The approximate posterior distribution may vary  by different choice of these values.
Theoretically, the smaller the relaxation parameter $u^2$ is, the closer the approximate posterior is to the true posterior. But in practice we may need moderately large value of $u^2$ to get stable posterior approximation. 
We suggest following strategy for choosing the variance of state $u^2$.
Consider various $u^2$ values from large to small values in turn. For each $u^2$ value, check the stability of posteriors by running two or three ELW filters simultaneously. Here, the stability means that all posterior densities based on ELW runs are closed enough to each other. Finally, use the smallest $u^2$ value  for the inference which gives the stable result.

%We suggest using a small value of state variance, which represents that the data comes from the DEM. 
%When the numerical approximation of $x$ works well, it implies that the data comes from the regression model whose mean function is given by ODE.
%Since our original model is the regression model, although we use the dynamic model to use its sampling procedure, this assumption makes sense. 
%However, if the data seems to have an additional error, usage of the positive value or prior for $u^2$ can be helpful to get a stable result.

%For the choice of $m$, we start with $m=1$ and $u^2=0$. We apply larger values of $m$ in turn, and if the change in the mean of the posterior is less than 2\%, we stopped. In the real example analysis, we change the threshold 2\% to 5\% because of the unstable posterior. 
%We used the sequence of $m$ as $1, 2, 4, 8, 14, 20, 30, \ldots.$ 
For convenience, let $h \equiv t_{i+1}-t_i$ for all $i=1,2,\ldots, n-1$. For the choice of $m$, we assume  $h/m = O(n^{-\alpha})$. Theoretically, the larger value of $m$ gives more  accurate inference, but it would require heavier computation.
In the following theorem, we relate the step size $h/m$ to the approximation error rate of the posterior, and  based on the theorem we suggest values of $m$ for computation according to the acceptable error rate. The theorem requires the following assumptions.
\begin{itemize}
	\item[A1.] $\{x(t): t\in [0,T]\}$ is a compact subset of $\mathbb{R}^p$;
	\item[A2.] $\{y(t): t\in [0,T]\}$ is a bounded subset of $\mathbb{R}^p$; and
	\item[A3.] the $K$th order derivative of $f(x,t;\theta)$ with respect to $t$ exists and is continuous in $x$ and $t$, where $K$ is the order of the numerical method $g$.
\end{itemize}
\begin{theorem}\label{errorrate}
Consider model \eqref{ODEmodel} and prior \eqref{priors}. Suppose $f( x, t ; \theta)$ satisfies Lipschitz condition \eqref{Lipschitz} in $x$, and suppose $A1-A3$ hold. Let $K$ be the order of the numerical method $g$ and $h/m = O(n^{-\alpha})$. 
If $\alpha \ge (1+R)/K$, 
the error rate of the posterior approximation is $O(n^{-R})$ for sufficiently large $n$, i.e.,
$$\pi_m(x_0, \theta, \lambda | {\bf y_n}) = \pi(x_0,\theta,\lambda | {\bf y_n}) \times (1+O(n^{-R})) $$
for all $x_0,\theta,\lambda$, then $\alpha \ge (1+R)/K$ is sufficient.
\end{theorem}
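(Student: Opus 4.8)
The plan is to reduce the claim to a uniform estimate of the difference between the ``integrated-out'' likelihoods of the $m$-step numerical model and the true model, and then to carry that estimate through the normalizing constant. Recall from \eqref{postofODE} that both $\pi_m$ and $\pi_{\text{true}}$ have the form $L^*\pi/\int L^*\pi$, where
\[
L_m^*(x_0,\theta,\lambda)=\lambda^{np/2}\exp\!\Bigl(-\frac{\lambda}{2}\sum_{i=1}^n\|y_i-g_m^i(x_0,t_{i-1};\theta)\|^2\Bigr),
\]
with $g_m^i$ the $i$-fold iterate of the order-$K$ one-step map on mesh $h/m$, and $L_{\text{true}}^*$ the same expression with $g_m^i(x_0,t_{i-1};\theta)$ replaced by the exact solution $x(t_i)=x(t_i;x_0,\theta)$ of \eqref{dotx}. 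So it suffices to show $L_m^*/L_{\text{true}}^*=1+O(n^{-R})$ uniformly over the support of the prior, and then to check that the two normalizers agree to the same order.

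First I would invoke the classical global error bound for one-step methods: under the Lipschitz condition \eqref{Lipschitz} and assumption A3 the local truncation error over a subinterval of length $h/m$ is $O((h/m)^{K+1})$, and Gronwall-type accumulation over the fixed interval $[0,T]$ yields
\[
\max_{1\le i\le n}\ \sup_{(x_0,\theta)}\ \bigl\|g_m^i(x_0,t_{i-1};\theta)-x(t_i)\bigr\| = O\!\left((h/m)^K\right)=O(n^{-\alpha K}),
\]
with a constant depending only on $T$, $K$, the Lipschitz constant and bounds on the derivatives of $f$, all uniform over $\theta\in\Theta$ and $x_0$ in a compact set by A1 and A3. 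Writing $y_i-g_m^i=(y_i-x(t_i))+(x(t_i)-g_m^i)$ and expanding the square,
\[
\|y_i-g_m^i\|^2-\|y_i-x(t_i)\|^2 = \|x(t_i)-g_m^i\|^2 + 2\langle y_i-x(t_i),\, x(t_i)-g_m^i\rangle,
\]
and since $\{x(t)\}$ and $\{y(t)\}$ are bounded (A1, A2), each summand is $O(n^{-\alpha K})$; summing over $i=1,\dots,n$ the exponent of $L_m^*/L_{\text{true}}^*$ equals $-\frac{\lambda}{2}\,\Delta_n(x_0,\theta)$ with $|\Delta_n|\le C n^{1-\alpha K}$ uniformly. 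Hence $L_m^*/L_{\text{true}}^* = \exp(-\frac{\lambda}{2}\Delta_n)=1+O(n^{1-\alpha K})$, and the hypothesis $\alpha\ge(1+R)/K$ is precisely the inequality $1-\alpha K\le -R$ that turns this into $1+O(n^{-R})$.

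To pass from the likelihood ratio to the posterior ratio I would write
\[
\frac{\pi_m(x_0,\theta,\lambda\mid{\bf y}_n)}{\pi_{\text{true}}(x_0,\theta,\lambda\mid{\bf y}_n)} = \frac{L_m^*(x_0,\theta,\lambda)}{L_{\text{true}}^*(x_0,\theta,\lambda)}\cdot\frac{\int L_{\text{true}}^*\,\pi}{\int L_m^*\,\pi}.
\]
The first factor is $1+O(n^{-R})$ pointwise. For the second, since $|L_m^*/L_{\text{true}}^*-1|\le c(x_0,\theta,\lambda)\,n^{-R}$ with $c$ (essentially a constant times $\lambda$) integrable against $L_{\text{true}}^*\pi$ — the relevant integral being a finite posterior moment — I would pull the bound through the integral to get $\int L_m^*\pi=(\int L_{\text{true}}^*\pi)(1+O(n^{-R}))$, so the second factor is also $1+O(n^{-R})$; the product gives the stated expansion.

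The main obstacle is the \emph{uniformity} of the global error estimate over the part of $(x_0,\theta)$-space carrying the prior mass, which is exactly what is needed to extract the $O(n^{-R})$ factor from the normalizing integrals. If $\Theta$ (or the effective support of $\pi$) fails to be compact, this step must be supplemented either by a tightness argument confining the integrals to a large compact set on which the derivative bounds of A3 are uniform, or by an explicit tail estimate showing the discarded region contributes negligibly to both normalizers. The remaining ingredients — the one-step local truncation estimate and the elementary square expansion — are routine consequences of A1--A3 and \eqref{Lipschitz}.
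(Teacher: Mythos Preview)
Your proposal follows essentially the same route as the paper's proof: both invoke the standard global error bound $\|x(t_i)-g_m^i\|=O((h/m)^K)$ for a $K$th-order one-step method under the Lipschitz condition and A3, then bound the difference $\sum_i\|y_i-g_m^i\|^2-\sum_i\|y_i-x(t_i)\|^2$ by $O(n(h/m)^K)$ using the boundedness from A1--A2, and finally pass to the likelihood ratio via $e^{O(n(h/m)^K)}=1+O(n(h/m)^K)$, which becomes $1+O(n^{-R})$ once $\alpha\ge(1+R)/K$. The paper uses the factorization $|\|a\|^2-\|b\|^2|=(\|a\|+\|b\|)\,\big|\|a\|-\|b\|\big|$ together with the reverse triangle inequality, whereas you expand the square directly; these are equivalent.

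Your treatment is in fact more careful than the paper's in two respects. First, the paper dispatches the normalizing constants with $\propto$ signs and never explicitly verifies that $\int L_m^*\pi=\bigl(\int L_{\text{true}}^*\pi\bigr)(1+O(n^{-R}))$; you do, and correctly note that this requires the $O(n^{-R})$ estimate to be integrable against $L_{\text{true}}^*\pi$. Second, you flag the uniformity issue over $(x_0,\theta)$ needed for that integration step, which the paper passes over in silence. Neither point changes the structure of the argument, but both are legitimate refinements.
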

 Note that the order of Runge-Kutta method is 4, and the rate of $h$ is $n^{-1}$ because we consider a bounded time interval $[0,T]\subset \bbR$ with $T<\infty$. 
By the above theorem, if we want to get the error rate $O(n^{-3})$ or larger, we know that it can be achieved by $m=1$ for large $n$. However, in practice, one should notice that the additional error from the SMC sampling may arise. In such case, we may need to use $m$ bigger than $1$.

%In the simulated example in subsection \ref{subsection:LV}, we assume the sample size $n$ is large enough and use $m=1$ by the theorem. Although we omit the results about different choice of $m$, we checked that the larger values of $m$ give quite similar results to $m=1$.

%%%%%%%%%%%%%%%%%%%%%%%%%%%%%%%%%%%%%%%%%%%%%%%%%%%%%%%%
% 5
\section{Simulated Data Examples}\label{section:examples}
% Newton's law of Cooling
\subsection{Newton's law of Cooling}
\subsubsection{Description of model and data generation step}
Newton's law of cooling, made by English physicist Isaac Newton, is a model describing the temperature change of an object. 
According to the model, the temperature of an object changes  proportional to the temperature difference between the object and its surroundings. This notion is given by the following ODE form 
\begin{eqnarray}\label{NLC}
{\dot x}(t) & = & \theta_1(x(t) - \theta_2),
\end{eqnarray}
where $x(t)$ is the temperature of the object at time $t$, $\theta_1$ is a negative constant and $\theta_2$ is the temperature of the surroundings. All of the temperature are in Celcius. For more details, see Incropera (2006).\nocite{Incropera06}

We chose this model  as a testbed for our method. Since the solution of \eqref{NLC} is known as
\bean\label{NLCtrue}
x(t) = \theta_2 - (\theta_2 - x_0)e^{\theta_1 t}
\eean
where $x_0 = x(0)$,  we can calculate the true posterior  directly. 
The data $y_i = y(t_i)$ was generated with the true mean function \eqref{NLCtrue} and we set the model parameters as $x_0 = 20$, $\theta = (-0.5, 80)^T$, $\sigma^2 = 25$ and time points $t_i = i h $ for $i= 1,\ldots, n$ where the sample size $n =100$ and the step size $h = 0.15$. The simulated data and the true mean function are shown in Figure \ref{fig:NLCm}.
% fig
\begin{figure}[!tb]
		\centering
		\includegraphics[width=13cm, height=11cm]{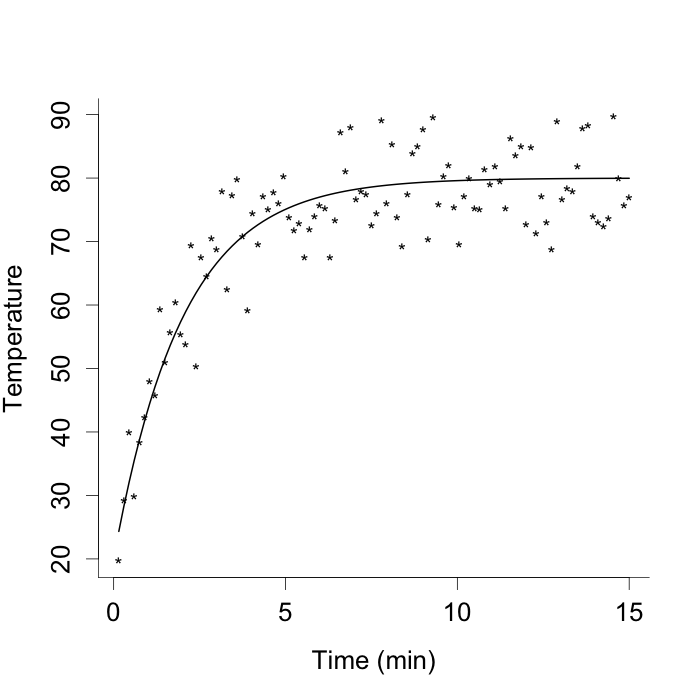}
		\caption{The solid line is the true temperature as a function of time from the Newton's law of cooling model with $x_0 =$ 20, $\theta = (-$0.5, 80$)^T$. The star-shaped points are the generated data of temperatures with $\sigma^2=25$.}
		\label{fig:NLCm}
\end{figure}

The priors were set by
\begin{eqnarray*}
x_0 \mid \lambda & \sim& N(\mu_{x_0} , c/\lambda) \\
\lambda & \sim& Gamma(a_\lambda, b_\lambda) \\
\theta =(\theta_1, \theta_2)  & \sim& Uniform \left((-100, 0)\times (50, 150) \right)
\end{eqnarray*}
where $\mu_{x_0} = y_1, a_\lambda=1, b_\lambda=1$ and $c = 1$. The values of $y_i$ are in  the interval $[65, 90]$ after 50th observation, and the temperature of the surroundings, $\theta_2$, must be the around the interval. The prior of $\theta_2$ is set by $Uniform (50, 150)$ whose support includes $[65, 90]$. With a similar reasoning, we set $\theta_1 \sim Uniform(-100, 0)$. 

%In our simulated data, $y_i$ moved up and down in the interval $[65, 90]$ after 50th observation. It gave the intuition that the temperature of the surroundings, $\theta_2$, must be the around the interval, so the support $(50, 150)$ for $\theta_2$ seemed to be enough. 
%Since we approximate the posterior of $\theta$ as a normal distribution, a log transformation $\theta_i' = \log ((\theta_i-a)/(b-\theta_i))$ was used when $\theta_i \sim Uniform(a,b)$.

The true posterior of $\theta$ and $\lambda$ can be obtained as follows:
\begin{eqnarray}\label{truePost}
\begin{split}
\lambda \mid \theta, y_{1:n} &\sim Gamma(\frac{np}{2}+a_\lambda, \frac{1}{2}\tilde{u}(\theta) + b_\lambda)\\
\theta \mid y_{1:n} &\sim \frac{1}{(\frac{1}{2}\tilde{u}(\theta)+b_\lambda)^{\frac{np}{2}+a_\lambda} } I(-100 < \theta_1 < 0) I(50<\theta_2 < 150),
\end{split}
\end{eqnarray}
where
\begin{eqnarray*}
\tilde{u}(\theta) &=& \mu_{x_0}^2/c + \sum_{i=1}^n z_i^2 - (1/c + \sum_{i=1}^n e^{2\theta_1ih})^{-1}(\mu_{x_0}/c + \sum_{i=1}^n z_ie^{\theta_1ih})^2, \\
z_i &=& z_i(\theta) =  y_i - \theta_2 + \theta_2 e^{\theta_1ih}.
\end{eqnarray*}

\subsubsection{Assessment of the convergence of the posteriors}
We assessed the convergence of posteriors which is described at Theorem \ref{convtoDE}.
To show that the posterior of dynamic model converges to that of DEM, we got the simulation results for RDEM with $u^2= 1, 0.1^1, 0.1^2$ and $0.1^5$. 
The DEM was treated as a dynamic model with small value of $u^2$. 
We ran the ELW filter based on 20,000 particles and fixed the number of segments $m$ at 1. For all of the settings, the ELW filter takes less than 3 seconds for 20,000 particles. 
The histogram of the marginal posterior distributions are drawn at Figure \ref{fig:NLCu2}. 
It seems that the posterior of dynamic model approaches that of the DEM as $u^2$ decreases to zero. Thus, it supports the theoretical result, Theorem \ref{convtoDE}. 
% fig
\begin{figure}[!tb]
	\centering
	\includegraphics[width=15cm, height=7cm]{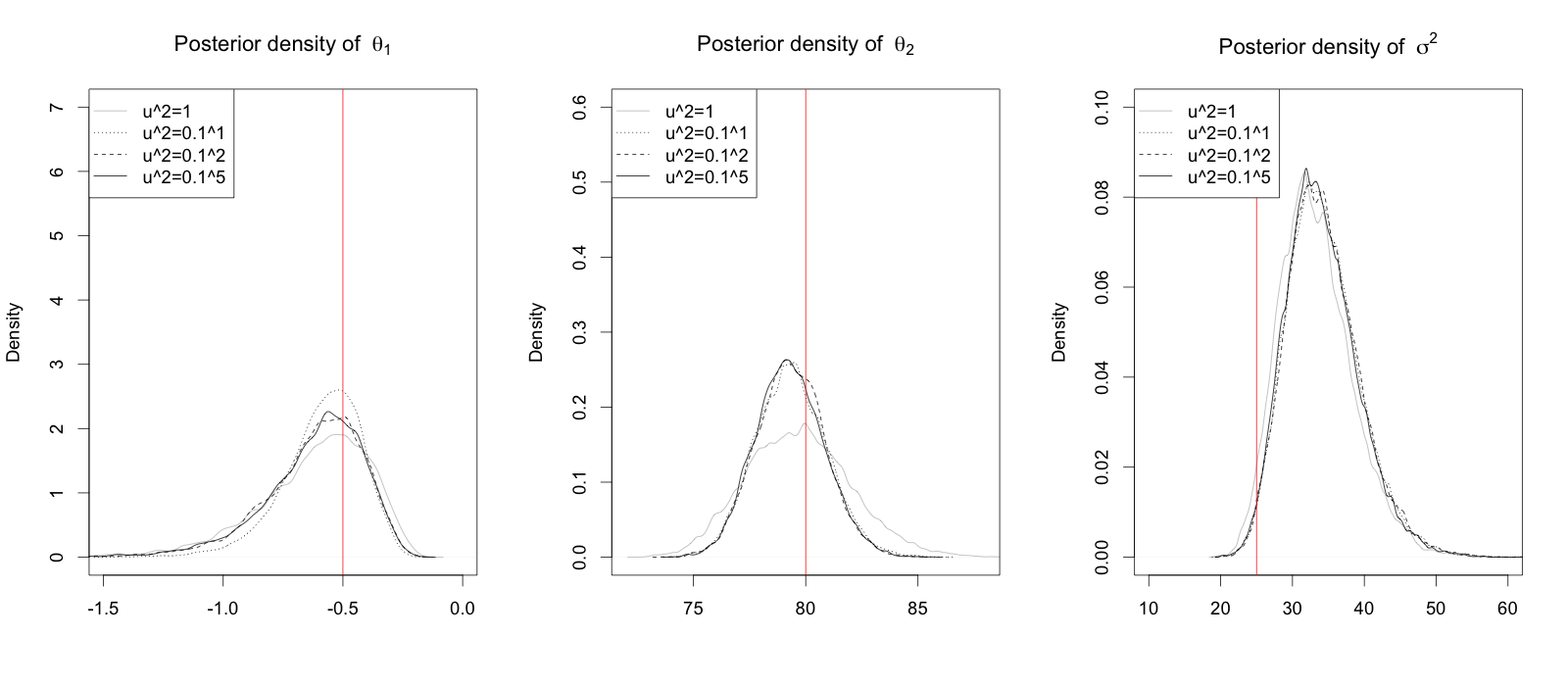}
	\vspace{-.5cm}
	\caption{The histograms of the marginal posterior distributions of the dynamic models with $u^2=1, 0.1, 0.1^2, 0.1^5$ and $m=1$ from the Newton's law of cooling. The red lines are the true values of parameters, $(\theta_1, \theta_2, \sigma^2) = (-0.5, 80, 25)$.}
	\label{fig:NLCu2}
\end{figure}

To show that the posterior of DEM converges to that of true model, we got the simulation results for the DEM with the number of segments $m=1,2,4$ and the true model. 
We approximated DEM by the dynamic model with $u^2= 0.1^5$.
For the true model, we used a grid sampling algorithm for the true posterior \eqref{truePost}. For each setting, the ELW filter takes less than 3 seconds for 20,000 particles. 
The grid set was chosen by $[-2, 0]\times[70,90]$, and each axis was divided into 50 equal length intervals resulting 51 points. 20,000 posterior samples were drawn.
The histograms of the marginal posterior distributions are drawn at Figure \ref{fig:NLCminf}. 
The posterior densities of DEM are quite similar to each other, but they have the larger variation than the true posterior densities. 
%We applied the ELW filter algorithm for the true mean model and confirmed that the posterior of the true model is close to the posteriors of the DEM. It supports the theoretical result of Theorem \ref{convtoTrue}.
%Thus, our testbed example gives us the desired result; the dynamic model converges to the true model in posterior convergence sense.
% fig
\begin{figure}[!tb]
	\centering
	\includegraphics[width=15cm, height=7cm]{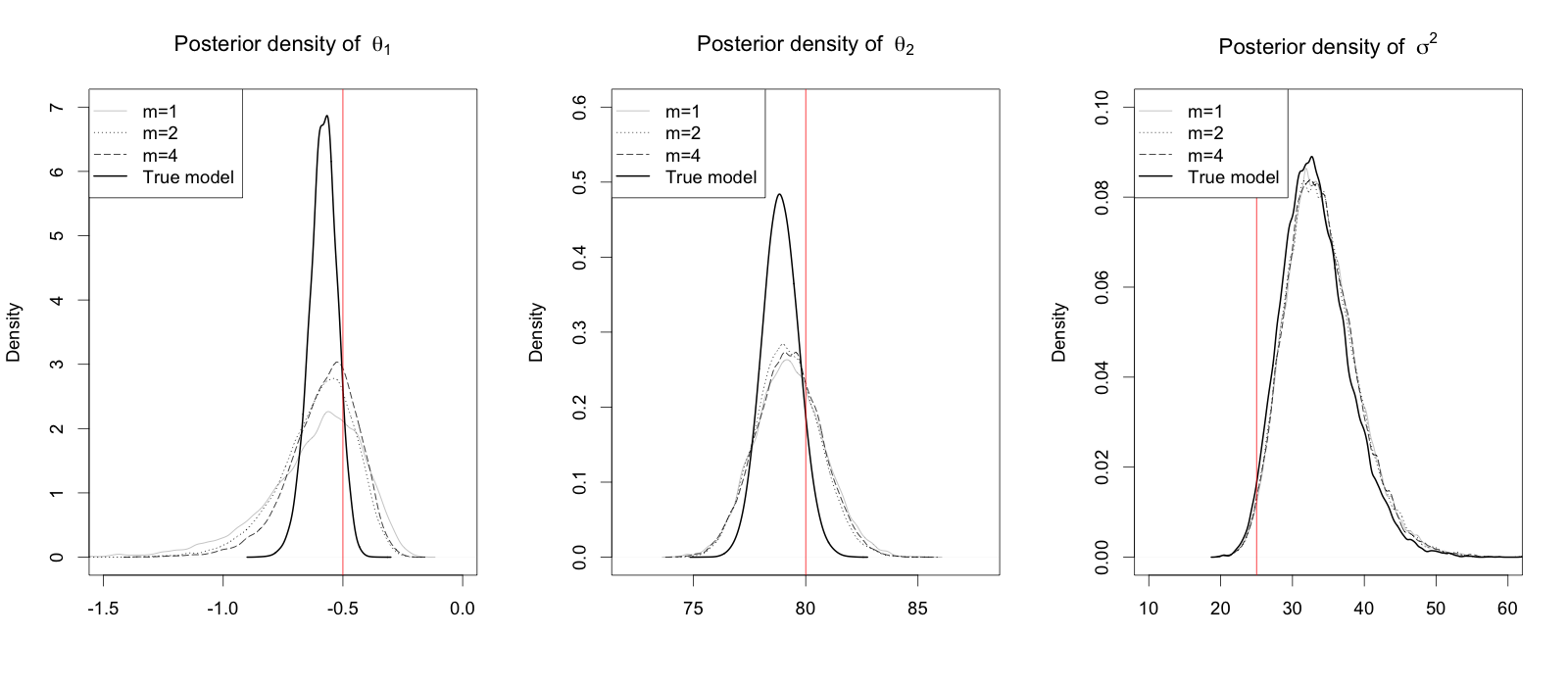}
	\vspace{-.5cm}
	\caption{The histograms of the marginal posterior distributions of the dynamic models with $u^2 =0.1^5$ and $m=1,2, 4$, and those of the true model from the Newton's law of cooling. 
	The red lines are the true values of parameters, $(\theta_1, \theta_2, \sigma^2) = (-0.5, 80, 25)$.}
	\label{fig:NLCminf}
\end{figure}

%\subsubsection{Assessment of the refinement of initial particles}
%In the above simulations, we used the refined particles as we stated in section \ref{section:ELW}. The usage of the refined particles gives a stable result in terms of the consistency between the repeated simulations. 
%
%\begin{figure}[!tb]
%	\centering
%	\subfloat[Original initial particles]{
%		\includegraphics[width=8cm, height=10cm]{pic/NLC_rep1.png}
%	}
%	\subfloat[Refined initial particles]{
%		\includegraphics[width=8cm, height=10cm]{pic/NLC_rep2.png}
%	}
%	\vspace{-.cm}
%	\caption{The contour plots of the posterior distributions of DEMs from the Newton's law of cooling. The 10 replications of the ELW filter are displayed using different colors.}
%	\label{fig:NLCrep}
%\end{figure}
%To emphasize the performance of the refined particles, we run the ELW filter 10 times with initial particles from the prior distribution and rerun the ELW filter with the resulting particles from the first-stage inference. Each filter was run based on 10,000 particles. 
%The contour plots of the posterior distributions with the original initial particles are drawn at Figure \ref{fig:NLCrep} (a). The inference results were quite unstable even if all the setting are same.
%On the other hand, the refined initial particles gave a stable result as shown in Figure \ref{fig:NLCrep} (b). The discrepancy among simulatins seemed to be reduced significantly after refinement.

%%%%%%%%%%%%%%%%%%%%%%%%%%%%%%%%%%%%%%%%%%%%%%%%%%%% 
% Lotka-Volterra equation
\subsection{FitzHugh-Nagumo model}\label{subsection:LV}
\subsubsection{Description of model and data generation step}\label{subsubsection:LVdesc}
FitzHugh-Nagumo model (FitzHugh, 1961\nocite{Fitzhugh1961impulses}; Nagumo et al. 1962\nocite{Nagumo1962active}) describes the action of spike potential in the giant axon of squid neurons by an ODE with two state variables and three parameters:
\bea
\dot{x}_1(t) &=& \theta_3 \left( x_1(t) - \frac{1}{3}x_1^3(t) + x_2(t) \right) , \\
\dot{x}_2(t) &=& -\frac{1}{\theta_3} \left( x_1(t) - \theta_1 + \theta_2 x_2(t)  \right),
\eea
where $-0.8 < \theta_1, \theta_2 < 0.8$ and $0 < \theta_3 <8$. The two state variables, $x_1(t)$ and $x_2(t)$, are the voltage across an membrane and outward currents at time $t$, respectively. 

% fig
\begin{figure}[!tb]
	\centering
	\includegraphics[width=0.65\textwidth]{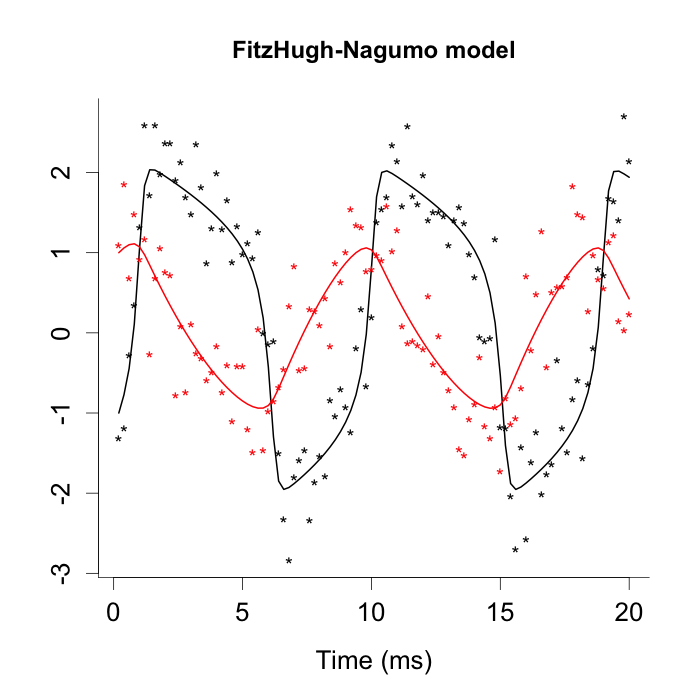}
	\caption{The solid lines are $x_1$ (black line) and $x_2$ (red line) as a function of time from the FitzHugh-Nagumo model with $x(t_0) = (-1, 1)^T, \theta = (0.2, 0.2, 3)^T$. The star-shaped points are the generated data of the populations with $\sigma^2=0.25$.}
	\label{fig:FNmod}
\end{figure}
Using the FitzHugh-Nagumo model, we compare the proposed method with the parameter cascading method (Ramsay et al., 2007), the delayed rejection adaptive Metropolis (DRAM) algorithm (Soetaert and Petzoldt, 2010)\nocite{soetaert10} and the Laplace approximated posterior (LAP) method (Dass et al., 2017)\nocite{dass2017laplace}.
The data $y_i = y(t_i)$ was generated from DEM \eqref{ODEmodel} with the model parameters $x_0 = (-1, 1)^T, \theta = (0.2, 0.2, 3)^T$, $\sigma^2 = 25$ and time points $t_i = i h $ for $i= 1,\ldots, n$, where the sample size $n = 100$ and the step size $h = 0.2$, $m=400$. 
The simulated data and the true mean function are shown in Figure \ref{fig:FNmod}.
The priors were set by 
\begin{eqnarray*}
x_0 \mid \lambda &\sim& N(\mu_{x_0} , c\lambda^{-1} I_2) \\
\lambda &\sim& Gamma(a_\lambda, b_\lambda)\\
\theta  &\sim& Uniform(A)
\end{eqnarray*}
where $\mu_{x_0} = y_1, a_\lambda=1, b_\lambda=1$, $c = 1$ and $A = \{(\theta_1,\theta_2,\theta_3): -0.8 <\theta_1,\theta_2<0.8 , 0< \theta_3< 8 \}$.

\subsubsection{Comparison with other methods}
To compare  the proposed method (RDEM-ELW) with other methods, the parameter cascading (PC) method, DRAM algorithm and LAP method were applied to the same data set. We used the \verb|R| packages \verb|CollocInfer| and \verb|FME| for the parameter cascading and DRAM, respectively. 

The PC method is one of the popular frequentist methods for estimating the parameters in ODE. It uses the collocation method which represents the state vector $x(t)$ as a series of basis expansion. The penalized likelihood criterion has three components: the matrix of coefficients of basis expansions $C$, the unknown parameter $\theta$ and the smoothing parameter $\lambda$. PC optimizes the penalized likelihood by two steps. In the inner optimization, the criterion is optimized with respect to the coefficient $C$ while $\theta$ and $\lambda$ are fixed. After that, in the outer optimization, the penalized likelihood is optimized with respect to $\theta$ while $\lambda$ is kept fixed. The smoothing parameter $\lambda$ is chosen based on the appropriate criteria such as the numerical stability of parameter estimates or the forward prediction error (Hooker et al., 2000)\nocite{hooker10}. For more details about PC method, see Ramsay et al. (2007)\nocite{RamsayHooker07}.
For the PC method, we used the third-order B-spline basis and $2n-1$ equally spaced knots on $[t_0, t_n]$. The smoothing parameter was set by $\lambda = 10^5$. The initial parameter were drawn from $N(\theta_0, (0.01)^2I_q)$ where $\theta_0$ is the true parameter value. 

The DRAM algorithm, a variant of the standard Metropolis-Hastings algorithm (Metropolis et al., 1953;\nocite{metropolis53} Hastings, 1970\nocite{hastings70}), is chosen as a benchmark in the Bayesian side. 
With the \verb|R| package \verb|FME| (Soetaert and Petzoldt, 2010)\nocite{soetaert10package}, one can infer the DEM with DRAM algorithm for the parameters and numerical integration for the state variables. 
We applied the DRAM algorithm with the initial parameter as the maximum likelihood estimate using \verb|modFit()| function and the maximal number of tries 1. The parameter covariance was updated in every 100 iteration. 
We got 20,000 posterior samples for the inference. 

LAP method is another benchmark in the Bayesian side. It is fast when the dimension of parameter is small and empirically has comparable or better performance than PC method and DRAM algorithm (Dass et al., 2017)\nocite{dass2017laplace}. Since the dimension of parameter is small, the grid sampling method for $\theta$ was chosen. For each parameter $\theta_i$, the grid range was chosen by $[\widehat{\theta}_i^R \pm 4 \widehat{sd}(\widehat{\theta}_i^R)]$ where $\widehat{\theta}_i^R$ is the parameter estimate for $\theta_i$ from the PC method. Each axis was divided into $31$ intervals of equal length, and the step size for numerical integration was set at $m=2$. The priors for parameters were set as in subsection \ref{subsubsection:LVdesc}, and 20,000 posterior samples were obtained.

For the RDEM-ELW, the step size for numerical integration and the variance for the state were chosen by $m=2$ and $u^2= 0 .1^5$, respectively. 
The priors for parameters were set as described in subsection \ref{subsubsection:LVdesc}, the number of particles was chosen by $N=20,000$.
We generated 100 simulated data set using the 4th order Runge-Kutta. The model parameters were set as described in subsection \ref{subsubsection:LVdesc}.

For RDEM, PC and DRAM methods, \verb|R| and C/C$++$ were used for implementation. \verb|R| and Fortran90 were used for LAP method. On average based on 100 simulations, it took only 3.523 seconds for estimation, while the PC method, DRAM algorithm and LAP method took 49.152, 276.700 and 215.591 seconds, respectively. The boxplot of computation times for each method is given at Figure \ref{fig:compare_times}. The proposed RDEM method significantly reduced the computation time. It was even faster than the frequentist method, the PC method. 
Thus, the RDEM method has an enormous advantage in computation speed over other methods. 
%As is expected, the accuracy
Table \ref{table:compare_rmse} represents the absolute biases, standard deviations for $\hat{\theta}$ and root mean squared errors (rmse) for $\hat{\theta}$ in the FitzHugh-Nagumo model. It seems RDEM method provides reasonable estimates in terms of bias, but larger standard deviation than others.
%it seems to be more spread than the true posterior.  

% fig
\begin{figure}[!t]
	\centering
	\includegraphics[width=10cm, height=9cm]{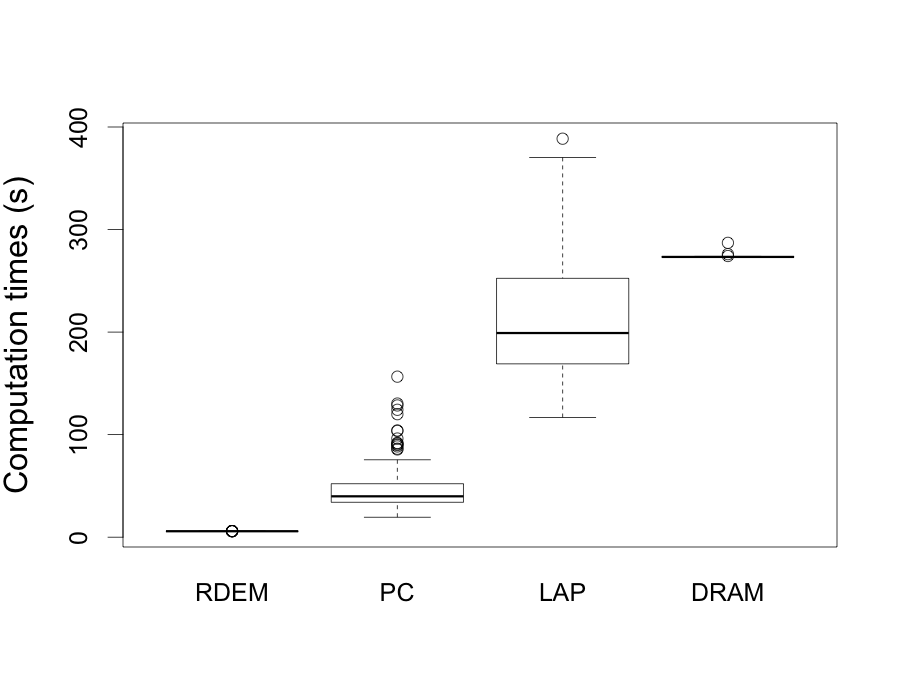}\vspace{-.5cm}
	\caption{The boxplots of the computation times for $\hat{\theta}$ based on 100 simulated date sets. The results for the relaxed DEM with ELW filter (RDEM), the parameter cascading (PC)method, Laplace approximated procedure (LAP) and delayed rejection adaptive Metropolis (DRAM) algorithm are shown.}
	\label{fig:compare_times}
\end{figure}
\begin{table}[!tb]
	\scriptsize\centering\vspace{0cm}
	\caption{The table of mean of the absolute biases, standard deviations and root mean squared errors (rmse) for $\hat{\theta}$ in the FitzHugh-Nagumo model. The results for the relaxed DEM with ELW filter (RDEM), parameter cascading (PC)method, Laplace approximated procedure (LAP) and delayed rejection adaptive Metropolis (DRAM) algorithm are shown.}
	\begin{tabular}{|c|c|c|c|c|c|}\hline
		\multicolumn{2}{|c|}{} & RDEM & PC & LAP & DRAM  \\ \hline 
		\multirow{3}{*}{Absolute bias} & $\theta_1$ & 0.051  & 0.024  & 0.024  & 0.024  \\ 
		& $\theta_2$ & 0.135 & 0.106 & 0.099  & 0.100  \\ 
		& $\theta_3$ &  0.108 & 0.039 & 0.044 & 0.047 \\ \hline
		\multirow{3}{*}{Standard deviation} & $\theta_1$ &  0.063  & 0.027  & 0.027  & 0.028   \\ 
		& $\theta_2$ & 0.130  & 0.123  & 0.117  & 0.119  \\ 
		& $\theta_3$ & 0.194 & 0.060 & 0.056 & 0.059 \\ \hline
		\multirow{3}{*}{rmse} & $\theta_1$ & 0.084  & 0.038  & 0.038  & 0.040  \\
		& $\theta_2$ & 0.198  & 0.171  & 0.161  & 0.164  \\ 
		& $\theta_3$ & 0.233 & 0.076 & 0.075 & 0.079  \\ \hline
	\end{tabular}\label{table:compare_rmse}
\end{table}
\section{Lynx-hare data: Lotka-Volterra equation}\label{sec:realdata}
There are large number of models to express predator-prey relationships because predation is often direct, conspicuous and easy to study. Lotka-Volterra model is one of the simplest model of predator-pray interactions. Lotka (1925)\nocite{Lotka25} and Volterra (1926)\nocite{Volterra27} independently developed the model of the form: 
\begin{eqnarray}\label{LVeq}
	\begin{split}
		\dot{x}_1(t) &=& x_1(t) (\theta_1 - \theta_2 x_2(t) ), \\
		\dot{x}_2(t) &=& -x_2(t) (\theta_3 - \theta_4 x_1(t) ), 
	\end{split}
\end{eqnarray}
where $x_1$ denotes the number of preys, and $x_2$ denotes the number of their predators. The model parameters $\theta_1, \theta_2, \theta_3$ and $\theta_4$ are the intrinsic rate of prey population increase, the predation rate, the predator mortality rate and the offspring rate of the predator, respectively. 

% fig
\begin{figure}[!tb]\centering
\begin{tabular}{ccc}
		\includegraphics[width=12cm, height=10cm]{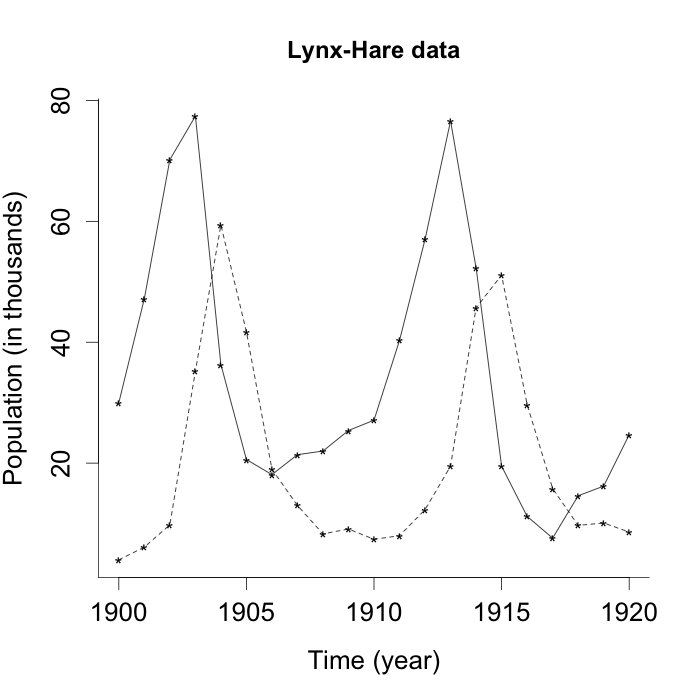}
\end{tabular}
\vspace{-0.5cm}
\caption{The numbers of trapped lynx and snowshoe hares between 1900 and 1920 is drawn. The solid line is the number of hares, and the dotted line is the number of lynx.}
\label{fig:LynxHare}
\end{figure}
Lynx-hare data is a popular data set representing the number of captured lynx and snowshoe hares in North Canada which was collected by Hudson Bay company. It contains the number of furs of lynx and hares, so it implies the actual populations of them. We obtained the annual data  between 1900 and 1920 recorded in thousands from Li (2012)\nocite{Li12} which is given at Figure \ref{fig:LynxHare}. The Lotka-Volterra equation, the equation \eqref{LVeq}, is fitted to the data set and used to predict the future values of trapped lynxes and hares. 

%% table
%\begin{table}[!tb]
%\scriptsize\centering
%\caption{Posterior summary statistics for the parameter of the Lotka-Volterra equation for the lynx-hare data with $m=1$.}
%\begin{tabular}{|c|c|c|c|}\hline
% & Mean & Median & 90\% credible interval  \\ \hline
%$\theta_1$ & 0.5494 & 0.5502 & (0.4870, 0.6093)  \\
%$\theta_2$ & 0.0262 & 0.0262 & (0.0237, 0.0289)  \\
%$\theta_3$ & 0.9790 & 0.9783 & (0.8616, 1.1000) \\
%$\theta_4$ & 0.0271 & 0.0271 & (0.0245, 0.0300)  \\ 
%$\sigma^2$ & 2.1332 & 1.9172 & (0.9461, 4.0576)  \\ 
%\hline 
%\end{tabular}\label{table:LV_m_real}
%\end{table}
% fig
\begin{figure}[!tb]\centering
\begin{tabular}{ccc}
		\includegraphics[width=16cm, height=10cm]{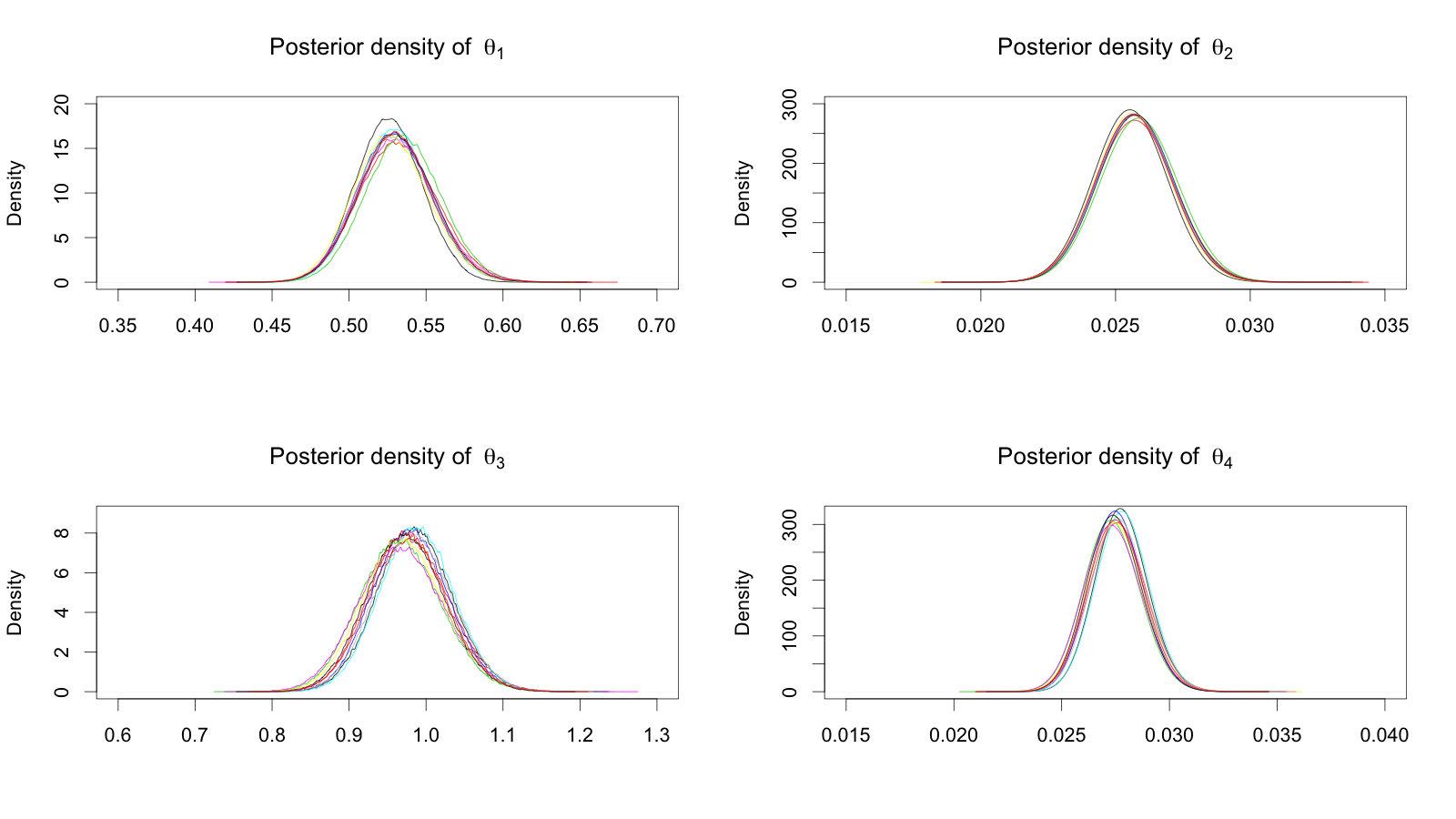}
\end{tabular}
\vspace{-1cm}
\caption{The posterior densities of the Lotka-Volterra equation for the lynx-hare data based on 10 ELW filter runs with $m=2$ and $u^2 = 5$.}
\label{fig:LV_m_real}
\end{figure}
% table
\begin{table}[!tb]
\scriptsize\centering
\caption{Posterior summary statistics for the parameter of the Lotka-Volterra equation for the lynx-hare data with $m=2$ and $u^2=10$.}
\begin{tabular}{|c|c|c|c|}\hline
 & Mean & Median & 90\% credible interval  \\ \hline
$\theta_1$ & 0.526  & 0.525  & (0.491, 0.562) \\
$\theta_2$ & 0.026  & 0.026  & (0.024, 0.027) \\
$\theta_3$ & 0.986  & 0.985  & (0.906, 1.067) \\
$\theta_4$ & 0.028  & 0.028  & (0.026, 0.030) \\ 
$\sigma^2$ & 4.087  & 3.818  & (2.018, 7.065) \\ 
\hline 
\end{tabular}\label{table:LV_m_real}
\end{table}
The same model and prior in subsection \ref{subsection:LV} were used. 
As we mentioned in subsection \ref{subsection:u2andm}, we ran the ELW filter 10 times based on $N= 500,000$ particles with $u^2 = 20, 10, 5, 1$ and $0.1^5$, in turn. In this case,  $u^2$ values  smaller than $5$ lead somewhat unstable approximation even with 3,000,000 particles.  
Finally, the state variance was chosen by $u^2= 5$ based on the criterion in subsection \ref{subsection:u2andm}, because it gives stable posterior densities for each ELW run. The other model parameters were chosen as the subsection \ref{subsection:LV}. On average, it took approximately 17 seconds for each run.  

The marginal posterior densities of parameters are given at Figure \ref{fig:LV_m_real}. Posterior summary statistics for the first run are represented at Table \ref{table:LV_m_real}. 
Figure \ref{fig:LVpred_real} contains the scatter plots of the observations and 90\% posterior credible lines for prediction values at 10 future time points when $m=2$ and $u^2=5$. The predicted values of trapped lynxes and hares follow oscillation patterns. The size of prediction interval gets wider as the prediction time gets further ahead and also the predicted value become larger. 

% fig
\begin{figure}[!tb]
\centering
\includegraphics[width=17cm, height=10cm]{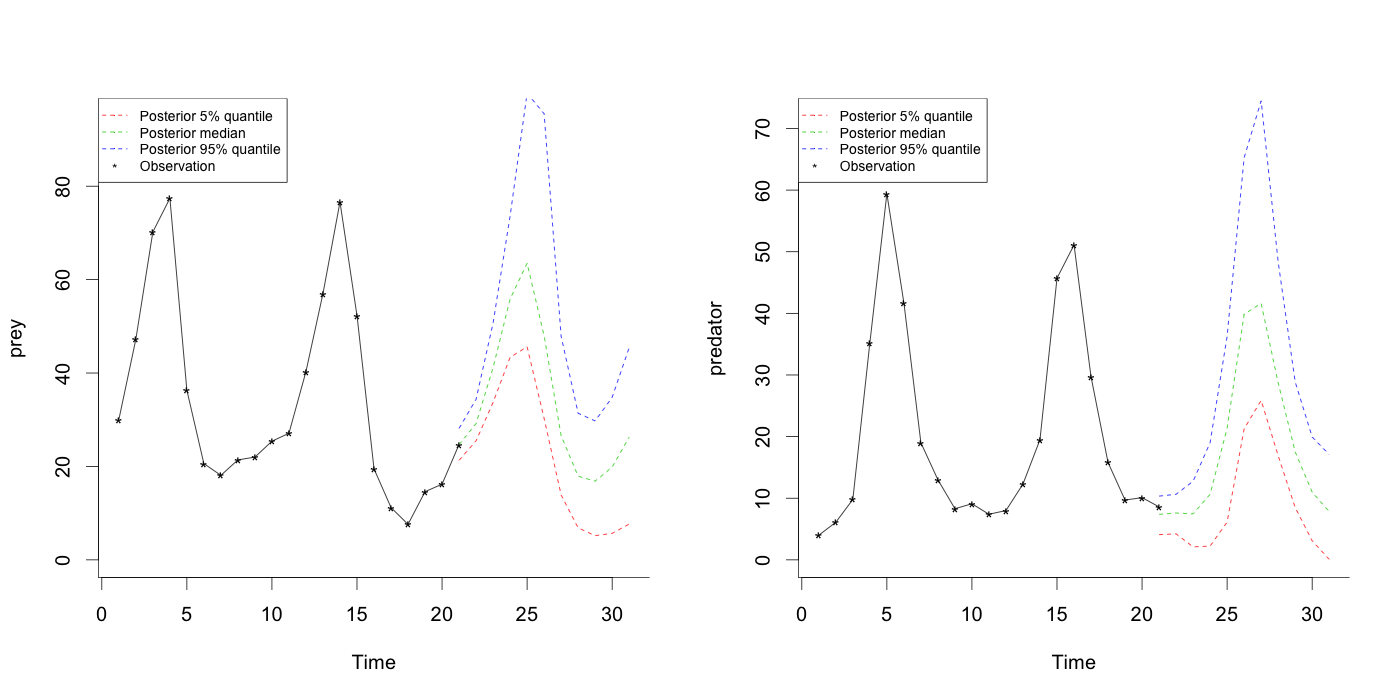}
\caption{Scatter plot of the lynx-hare data and plots of 90\% credible set lines for predictions of 10 time points ahead are drawn when $m = 2$ and $u^2=5$. The upper, lower and middle dotted lines are the 95\% and 5\% quantiles and median of the posterior, respectively. The star-shaped points are the lynx-hare data.}
\label{fig:LVpred_real}
\end{figure}

\section{Discussion}\label{sec:disc}
A lot of biological or physical systems are given by a set of differential equations. To understand these processes, estimation of their parameters is essential. However, especially in Bayesian literature, there is no standard framework to analyze differential equation model. In many cases, the posterior of parameter does not belong a well-known family, so grid sampling or MCMC methods are used to get posterior samples. They usually suffer from heavy computation. 
We propose a general framework to analyze DEM using relaxation via dynamical systems. The dynamic model enables a fast inference for DEM and provides convenient sampling methods.
Among the sampling algorithms for dynamic models, we adopted the ELW filter suggested by Rios and Lopes (2013).  
We argue that our method can be an alternative to the existing inference methods when one needs a fast and reasonable result.
This argument is supported by the example in subsection \ref{subsection:LV}.
Section \ref{section:Conv} guarantees the convergence of the approximated posterior to the true posterior.
However, the theoretical results in this paper does not consider the additional error from the SMC sampling. The proposed method may be improved if a better SMC algorithm is developed.

%%%%%%%%%%%%%%%%%%%%%%%%%%%%%%%%%%%%%%%%%%%%%%%%%%%%%%%%
% Appendix
\section*{Appendix}

The following lemma shows that each $x_i$ given $x_{i-1}, \theta, u^2$ converges to $g(x_{i-1},t_{t-1};\theta)$  in probability as $u^2 \to 0$.
%\begin{lemma}\label{indist}
%Consider model \eqref{NDmodel} and let $[Z]$ denote the distribution of random variable $Z$. Then, for $i =1, \ldots, n$, $[x_i | x_{i-1}, \theta , u^2]$ converges to $[ g(x_{i-1}, t_{i-1};\theta) | x_{i-1}, \theta ]$ as $u^2 \to 0$. 
%\end{lemma}
\begin{lemma}\label{inprob}
Consider model \eqref{NDmodel}. Then, for $i =1, \ldots, n$, $x_i$ given $x_{i-1}, \theta$ and $u^2$ converges to $g(x_{i-1}, t_{i-1}; \theta)$  in probability as $u^2 \to 0$.
\end{lemma}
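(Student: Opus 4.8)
The plan is to read the relevant conditional law directly off the relaxed model \eqref{NDmodel} and then apply an elementary second-moment estimate. By construction, conditionally on $x_{i-1}$, $\theta$ and $u^2$, the update equation $\tilde{x}_i = g(\tilde{x}_{i-1}, t_{i-1};\theta) + \eta_{i-1}$ with $\eta_{i-1} \sim N(0, u^2 I_p)$ gives
\[
x_i \mid x_{i-1},\theta,u^2 \;\sim\; N_p\big(g(x_{i-1},t_{i-1};\theta),\, u^2 I_p\big),
\]
so I can represent $x_i = g(x_{i-1},t_{i-1};\theta) + u\,Z$ with $Z \sim N_p(0,I_p)$, whence $\|x_i - g(x_{i-1},t_{i-1};\theta)\|^2 = u^2\|Z\|^2$.

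Next I would fix $\varepsilon > 0$ and bound the tail probability by Markov's inequality:
\[
\bbP\big(\, \|x_i - g(x_{i-1},t_{i-1};\theta)\| \ge \varepsilon \,\big|\, x_{i-1},\theta,u^2 \,\big) \;\le\; \frac{u^2\,\bbE\|Z\|^2}{\varepsilon^2} \;=\; \frac{p\,u^2}{\varepsilon^2}\,,
\]
which tends to $0$ as $u^2 \to 0$. Since $\varepsilon > 0$ is arbitrary, this is exactly the assertion that $x_i$ (given $x_{i-1},\theta,u^2$) converges to $g(x_{i-1},t_{i-1};\theta)$ in probability; moreover the bound $p\,u^2/\varepsilon^2$ is uniform in $i \in \{1,\dots,n\}$, so the conclusion holds simultaneously for all $i$. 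Equivalently, one could observe that the Gaussian conditional law has mean $g(x_{i-1},t_{i-1};\theta)$ and covariance $u^2 I_p \to 0$, hence converges weakly to the Dirac mass at $g(x_{i-1},t_{i-1};\theta)$, and weak convergence to a constant coincides with convergence in probability.

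I do not expect any real obstacle here: the only content is recognizing that the relaxation in \eqref{NDmodel} adds an additive Gaussian term whose scale is governed by $u^2$, so letting $u^2 \to 0$ collapses the conditional distribution of $x_i$ onto the Runge--Kutta value $g(x_{i-1},t_{i-1};\theta)$. Note that continuity of $f$ in $x$ is not needed for this lemma; it enters afterwards, when this pointwise convergence of the transition kernels is combined with a dominated-convergence argument to pass to convergence of the full joint density, and thence of the posterior, in the proof of Theorem~\ref{convtoDE}.
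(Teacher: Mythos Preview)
Your argument is correct and more direct than the paper's. The paper proves the same lemma by computing the moment generating function of $r^{T}x_i$ conditional on $x_{i-1},\theta,u^2$, letting $u^2\to 0$ to obtain the MGF of the constant $r^{T}g(x_{i-1},t_{i-1};\theta)$, invoking convergence of MGFs to get convergence in distribution of each one-dimensional projection, then applying the Cram\'er--Wold device to lift this to the vector, and finally the Portmanteau theorem to convert convergence in distribution to a constant into convergence in probability. Your route bypasses all of this machinery: once you read off the conditional law $N_p(g(x_{i-1},t_{i-1};\theta),u^2 I_p)$ from \eqref{NDmodel}, a single Chebyshev/Markov bound on $\|x_i-g(x_{i-1},t_{i-1};\theta)\|^2$ gives the explicit tail estimate $p\,u^2/\varepsilon^2$, which is both simpler and quantitative (and, as you note, uniform in $i$). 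Your closing remark that continuity of $f$ is not needed here is also accurate; the paper likewise only uses that hypothesis later, in Lemma~\ref{x1xn} and Theorem~\ref{convtoDE}.
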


\begin{proof}[Proof of Lemma \ref{inprob}]
Note that $r^T x_i | x_{i-1}, \theta, u^2 \sim N(r^T g(x_{i-1}, t_{i-1} ; \theta) , u^2 \|r\|^2)$ for all $r \in \bbR^p, i = 1, \ldots, n$.
If we denote $\phi_{[Z]}$ as a moment generating function (mgf) of random variable $Z$, then for any $r \in \bbR^p$,
\bean
\phi_{[r^T x_i | x_{i-1}, \theta, u^2]}(z) &=& \exp(r^T g(x_{i-1}, t_{i-1}; \theta) z + \half u^2 \|r\|^2 z^2 )\nonumber \\
&\to& \exp(r^T g(x_{i-1}, t_{i-1}; \theta) z) \label{mgfdelta}
\eean
as $u^2 \to 0$, for $i = 1, \ldots, n$. Note that \eqref{mgfdelta} is mgf of $[ r^T g(x_{i-1}, t_{i-1};\theta) | x_{i-1}, \theta ]$. Since the convergence of mgf implies the convergence of distribution, it implies 
$$[r^T x_i | x_{i-1}, \theta, u^2] \to [ r^T g(x_{i-1}, t_{i-1};\theta) | x_{i-1}, \theta ]$$
for any $r \in \bbR^p$. Hence, by the Cramer-Wold theorem (Billingsley, 1995)\nocite{billingsley1995probability}, it implies that $[ x_i | x_{i-1}, \theta ]$ converges to $g(x_{i-1}, t_{i-1};\theta)$ in distribution, as $u^2 \to 0$.
Note that given $x_{i-1}$ and $\theta$, $g(x_{i-1}, t_{i-1};\theta)$ is a constant. Thus, by Portmanteau theorem (Dudley, 2002)\nocite{dudley2002real}, it implies the convergence in probability. \quad $\square$
%Given $\epsilon > 0$, define an open set 
%$$B_{\epsilon} := \left\{ x_i :  \| x_i - g(x_{i-1}, t_{i-1}; \theta) \| < \epsilon \right\}.$$
%The proof is straightforward 
%\bea
%\lim_{u^2 \to 0} P( x_i :  \| x_i - g(x_{i-1}, t_{i-1}; \theta) \| \ge \epsilon | x_{i-1}, \theta,  u^2) &\le& 
%\limsup_{u^2 \to 0} P( x_i : \| x_i - g(x_{i-1}, t_{i-1}; \theta) \| \ge \epsilon | x_{i-1}, \theta,   u^2)\\
%&=& \limsup_{u^2 \to 0} P( x_i \in B_{\epsilon}^c | x_{i-1}, \theta,   u^2) \\
%&\le& P ( g(x_{i-1}, t_{i-1}; \theta)  \in B_{\epsilon}^c | x_{i-1} , \theta ) = 0
%\eea
%where the last inequality follow from convergence in distribution and the Portmanteau theorem (Dudley, 2002)\nocite{dudley2002real}. ~~~ $\square$
\end{proof}

With the continuity condition of $f(x,t ; \theta)$ in $x$, Lemma \ref{inprob} can be extended to the joint convergence in probability using the mathematical induction. Lemma \ref{x1xn} describes the result.

\begin{lemma}\label{x1xn}
Consider model \eqref{NDmodel}. Suppose $f( x, t ; \theta)$ is continuous in $x$. Then, $[x_1, \ldots, x_n \mid x_0, \theta, u^2]$ converges to $(g(x_0, t_0;\theta), \ldots, g^{n}(x_0, t_{n-1};\theta) )$ in probability as $u^2 \to 0$. 
\end{lemma}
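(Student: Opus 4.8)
The plan is to argue by induction on $n$, with the base case $n=1$ being precisely Lemma~\ref{inprob}. For the inductive step I would assume that $[x_1,\ldots,x_k \mid x_0,\theta,u^2]$ converges in probability, as $u^2\to 0$, to the deterministic vector $(g(x_0,t_0;\theta),\ldots,g^{k}(x_0,t_{k-1};\theta))$, and deduce the analogous statement for $k+1$ coordinates. Note that, conditionally on $x_0$ and $\theta$, the whole vector $x_{1:n}$ is a fixed (measurable) function of the independent noises $\eta_0,\ldots,\eta_{n-1}$ via the recursion $x_{i+1}=g(x_i,t_i;\theta)+\eta_i$ of \eqref{NDmodel}, so this recursive treatment is legitimate.

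Before the induction I would record two elementary facts. First, convergence in probability of a random vector to a \emph{constant} vector is equivalent to coordinatewise convergence in probability (by a union bound in one direction and the triangle inequality in the other); hence it suffices to handle each coordinate $x_j$, $j=1,\ldots,k+1$, separately. Second, the map $x\mapsto g(x,t;\theta)$ defined in \eqref{RK4} is continuous in $x$ whenever $f(\cdot,t;\theta)$ is continuous in $x$: each Runge--Kutta stage $k_{i1},\ldots,k_{i4}$ is obtained by composing $f$ with affine maps in $x$, and $g$ is an affine combination of these stages, so continuity is preserved.

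Given these, the inductive step is short. For $j\le k$ the convergence $x_j \to g^{j}(x_0,t_{j-1};\theta)$ in probability is the induction hypothesis. For $j=k+1$, write $x_{k+1}=g(x_k,t_k;\theta)+\eta_k$ with $\eta_k\sim N(0,u^2 I_p)$. Since $x_k\to g^{k}(x_0,t_{k-1};\theta)$ in probability and $g(\cdot,t_k;\theta)$ is continuous, the continuous mapping theorem for convergence in probability gives $g(x_k,t_k;\theta)\to g\bigl(g^{k}(x_0,t_{k-1};\theta),t_k;\theta\bigr)=g^{k+1}(x_0,t_k;\theta)$ in probability. Moreover $\bbE\|\eta_k\|^2=p\,u^2\to 0$, so $\eta_k\to 0$ in probability by Chebyshev's inequality (or, alternatively, by reusing the moment generating function argument of Lemma~\ref{inprob}). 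Since the sum of two sequences converging in probability converges in probability to the sum of the limits, $x_{k+1}\to g^{k+1}(x_0,t_k;\theta)$ in probability. Combining all coordinates with the first fact above completes the induction.

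The only delicate point --- the ``main obstacle,'' though it is a mild one --- is making the two preliminary facts airtight: verifying that $g$ inherits continuity in its first argument from $f$, and invoking the continuous mapping theorem correctly for convergence in probability toward a deterministic limit (so that no independence assumption between $x_k$ and $\eta_k$ is needed when summing). Once these are in place, everything else is routine bookkeeping with the recursion of \eqref{NDmodel}. \quad$\square$
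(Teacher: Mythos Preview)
Your proposal is correct and follows essentially the same inductive strategy as the paper: both use Lemma~\ref{inprob} for the base case, then in the inductive step split $x_{k+1}-g^{k+1}(x_0,t_k;\theta)$ into the noise term and $g(x_k,t_k;\theta)-g(g^{k}(x_0,t_{k-1};\theta),t_k;\theta)$, handling the latter via continuity of $g$ and the induction hypothesis. Your packaging via the continuous mapping theorem and closure of convergence in probability under addition is slightly cleaner than the paper's explicit triangle-inequality bound and iterated-expectation/bounded-convergence argument, but the underlying idea is identical.
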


\begin{proof}[Proof of Lemma \ref{x1xn}]
Let $X = (x_1, \ldots, x_n)$ and $\bar{X} = (g(x_0, t_0;\theta), \ldots, g^{n}(x_0, t_{n-1};\theta) )$ where
\bean\label{induc}
x_{i}^m = g^{i}( x_0, t_{i-1} ; \theta), ~~ i = 1, \ldots, n
\eean
by the relation \eqref{RK4} where $g^{i}(x_0, t_{i} ;\theta) = g( g^{i-1}(x_0, t_{i-1};\theta), t_i ; \theta)$ is defined recursively.
We want to show
$$\lim_{u^2 \to 0} P\Big( \| X - \bar{X} \| \ge \epsilon | x_0, \theta, u^2 \Big) = 0$$
for given $\epsilon > 0$. It suffices to prove
\bean\label{jointprob}
\lim_{u^2 \to 0} P\Big( \| x_i - g^{i}(x_0, t_{i-1};\theta) \| \ge \frac{\epsilon}{n} | x_0, \theta , u^2 \Big) = 0
\eean
for given $\epsilon > 0$ and $i = 1,\ldots, n$. We use the mathematical induction.

When $i  = 1$, we can check
\bea
\lim_{u^2 \to 0} P \Big( \| x_1 - g(x_0, t_0 ;\theta) \| \ge \frac{\epsilon}{n} | x_0, \theta, u^2 \Big) = 0
\eea
by Lemma \ref{inprob}.
Suppose \eqref{jointprob} holds for $i = k$. 
Note
\bean
&& P( \| x_{k+1} - g^{k+1}(x_0, t_k ; \theta) \| \ge \frac{\epsilon}{n} | x_0, \theta, u^2) \nonumber\\
&\le& P( \| x_{k+1} - g(x_k, t_k ; \theta) \| \ge \frac{\epsilon}{2n} | x_0, \theta, u^2)\label{T1}  \\
&+& P( \| g(x_k, t_k ; \theta) - g( g^{k}(x_0, t_{k-1} ; \theta), t_k ; \theta) \| \ge \frac{\epsilon}{2n} | x_0, \theta, u^2).\label{T2}
\eean
By assumption, $g(x,t | \theta)$ is continuous in $x$. Thus, \eqref{T2} converges to 0 as $u^2 \to 0$ because \eqref{jointprob} holds for $i=k$. 
Also note that \eqref{T1} is 
\bea
E_{x_2 | x_0, \theta, u^2} \ldots E_{x_k | x_{k-1}, \theta, u^2} \Big[ P( \| x_{k+1} - g(x_k, t_k ; \theta) \| \ge \frac{\epsilon}{2n} | x_k, \theta, u^2) \Big].
\eea
Since $P( \| x_{k+1} - g(x_k, t_k ; \theta) \| \ge \epsilon/(2n) | x_k, \theta, u^2) \le 1$ and Lemma \ref{inprob}, \eqref{T1} converges to 0 as $u^2 \to 0$ by the bounded convergence theorem. ~~~ $\square$
\end{proof}

\begin{proof}[Proof of Theorem \ref{convtoDE}]
Note that we need to prove
\bean
\int L(\Lambda) \pi (dx_1, \ldots, dx_n | x_0, \theta, u^2)\pi(x_0, \theta, \lambda) &\to& L^*(x_0, \theta, \lambda) \pi(x_0, \theta, \lambda), \label{num} \\
\int \int L(\Lambda) \pi (dx_1, \ldots, dx_n | x_0, \theta, u^2) \pi(dx_0, d\theta, d\lambda) &\to& \int L^*(x_0, \theta, \lambda) \pi(dx_0, d\theta, d\lambda)\label{denom} \hspace{1cm}
\eean
as $u^2 \to 0$ where $\Lambda = (x_1, \ldots,  x_n, \theta, \lambda)$. 
%If $f(x,t ; \theta)$ is continuous in $x$, the likelihood $L(\Lambda)$ is a continuous function of $x_0,\ldots, x_n$. Thus, \eqref{num} can be proved using Lemma \ref{x1xn} and the Portmanteau theorem. 
%The convergence \eqref{denom} is followed from \eqref{num} and the dominated convergence theorem. It is the sketch of the proof of Theorem \ref{convtoDE}, which is the main result of this paper.

To show \eqref{num}, we only need to prove
\bea
\int L(\Lambda) \pi(dx_1, \ldots , dx_n | x_0, \theta, u^2) \to L^*(x_0, \theta, \lambda)
\eea
as $u^2 \to 0$. Since $L(\Lambda) = \lambda^{{np}/{2}} \exp({-\frac{\lambda}{2}\sum_{i=1}^n \|y_i - x_i \|^2})$, it suffices to prove
\bean\label{num2}
\int e^{-\frac{\lambda}{2}\sum_{i=1}^n \|y_i - x_i \|^2} \pi(dx_1, \ldots , dx_n | x_0, \theta, u^2) \to e^{-\frac{\lambda}{2} \sum_{i=1}^n \|y_i - g^{i-1}(x_0, t_{i-1} ; \theta) \|^2}.
\eean
By Lemma \ref{x1xn}, we have 
$$[x_1, \ldots, x_n | x_0, \theta, u^2] \to [g(x_0, t_1 ; \theta) , \ldots, g^{n-1}(x_0, t_{n-1} ; \theta) | x_0, \theta]$$
as $u^2 \to 0$. Note that the right hand side of \eqref{num2} is the expectation of $\exp({-{\lambda}/{2}\cdot\sum_{i=1}^n \|y_i - x_i \|^2})$ with respect to $[g(x_0, t_1 ; \theta) , \ldots, g^{n-1}(x_0, t_{n-1} ; \theta) | x_0, \theta]$. Also note that $\exp({-{\lambda}/{2}\cdot\sum_{i=1}^n \|y_i - x_i \|^2})$ is bounded by 1 and is continuous in $x_1, \ldots, x_n$. Thus, the Portmanteau theorem implies \eqref{num}.

Since we have proved \eqref{num}, it suffices for \eqref{denom} to show that $\int L(\Lambda) \pi(dx_2, \ldots , dx_n | x_0, \theta, u^2)$ is dominated by an integrable random variable. It is easy to check because
\bea
\int L(\Lambda) \pi(dx_2, \ldots , dx_n | x_0, \theta, u^2) &\le& (\lambda)^{\frac{np}{2}}
\eea
and $(\lambda)^{{np}/{2}}$ is integrable with respect to $\pi(x_0, \theta, \lambda)$. The dominated convergence theorem gives the desired result. ~~~ $\blacksquare$
\end{proof}

\begin{proof}[Proof of Theorem \ref{convtoTrue}]
Denote the likelihood of approximated $x$ with the number of segments $m$ as $L_m(x_0, \theta, \lambda)$, and let $L_{\text{true}}(x_0, \theta, \lambda)$ be the likelihood of true $x$. 
We should prove that
\bea
\pi_m (x_0,\theta,\lambda | {\bf y}_n) = \frac{L_m(x_0,\theta,\lambda) \pi(x_0,\theta,\lambda)}{\int L_m(x_0,\theta,\lambda) \pi(dx_0,d\theta,d\lambda)}
\eea
converges to
\bea
\pi_{\text{true}} (x_0,\theta,\lambda | {\bf y}_n) = \frac{L_{\text{true}}(x_0,\theta,\lambda) \pi(x_0,\theta,\lambda)}{\int L_{\text{true}}(x_0,\theta,\lambda) \pi(dx_0,d\theta,d\lambda)}
\eea
for any $x_0, \theta$ and $\lambda$.
It is well known that if $f(x,t ; \theta)$ satisfies Lipschitz condition in $x$, then Runge-Kutta method converges to the true solution, i.e.
\bean\label{RKconv}
x^m_i(x_0, \theta) \to x_i(x_0, \theta) ~\text{ for all $x_0$ and $\theta$ as $m \to \infty$.}
\eean
See Cartwright and Piro (1992)\nocite{Cartwright92} for the proof. 
The convergence \eqref{RKconv} implies that $L_m(x_0,\theta,\lambda)$ converges to $L_{\text{true}}(x_0,\theta,\lambda)$ for all $x_0,\theta$ and $\lambda$ because an exponential function is continuous. It implies the convergence of numerator part.

For the denominator part, recall that
$$L_m(x_0, \theta,\lambda) \le (\lambda)^{\frac{np}{2}}$$
and $(\lambda)^{{np}/{2}}$ is integrable with respect to $\pi(x_0, \theta, \lambda)$. Again, the dominated convergence theorem gives the desired result. ~~~ $\blacksquare$
\end{proof}

\begin{proof}[Proof of Theorem \ref{errorrate}]
At first, we want to show that under $A1-A3$, $| ng_n(x_0) - ng_n^m(x_0)| = O(n (h/m)^K)$ for sufficiently large $n$. 
Since we assume the Lipschitz continuity of $f$, the ODE has a unique solution with initial condition $x(t_1) = x_0$. Assumptions A1 and A3 implies  
$$ \sup_{x, t} \| \frac{d^K}{dt^K} f(x,t;\theta) \| =: B < \infty$$
for some constants $B >0$. 
The  local errors of the $K$th order numerical method are given by
$$ \| x(t_i) - x(t_{i-1}) - h \phi(x_{i-1} , t_{i-1} ; \theta) \|  \le B' h^{K+1}, ~ i=1,\ldots, n$$
for some $B'>0$, which 
depends only on $\sup_{ t} \| d^K f(x,t;\theta)/ (dt^K) \| \le B$ (Palais and Palais, 2009).\nocite{Palais09}
Thus, the  local errors are uniformly bounded. It implies the global errors uniformly bounded by
$$\|x_i - x^h_i\| \le C h^K $$
for some constant $C>0$. Thus, 
\begin{eqnarray*}
| ng_n(x_0) - ng_n^m(x_0)| &=& \big| \sum_{i=1}^n \|y_i - x_i \|^2 - \sum_{i=1}^n\|y_i - x_i^m\|^2 \big|  \\
& = & \sum_{i=1}^n \big(  \|y_i - x_i \| +  \|y_i - x_i^m \| \big) \big|  \|y_i - x_i \| -  \|y_i - x_i^m \| \big|  \\
&\le& \sum_{i=1}^n \big( 2 \|y_i - x_i \| + \|x_i -x_i^m\| \big) \| x_i-x_i^m\|  \\
&\le& \sum_{i=1}^n \big( 2C_y + 2 C_x + \|x_i -x_i^m\| \big) \|x_i -x_i^m\|  \\
&\le&\sum_{i=1}^n \left(  2C_y + 2 C_x + C \Big(\frac{h}{m}\Big)^K \right) C \Big(\frac{h}{m}\Big)^K   \\
&\asymp& n \Big(\frac{h}{m} \Big)^K,
\end{eqnarray*}
where $\sup_{t \in [T_0,T_1]}\|y(t)\| < C_y < \infty$, $\sup_{t \in [T_0,T_1]}\|x(t)\| < C_x <\infty$ for sufficiently large $n$.   

By the above inequality, for fixed $x_0 \in \mathbb{R}^p, \lambda>0$,
\bea
e^{-\frac{\lambda}{2}ng_n^m(x_0)} &=& e^{-\frac{\lambda}{2}[ng_n(x_0)+ng_n^m(x_0)-ng_n(x_0)]}\\
&=&  e^{-\frac{\lambda}{2}ng_n(x_0)} \times e^{-\frac{\lambda}{2}[ng_n^m(x_0)-ng_n(x_0)]} \\
&=& e^{-\frac{\lambda}{2}ng_n(x_0)} \times e^{-\frac{\lambda}{2}O(n (\frac{h}{m})^K)} \\
&=& e^{-\frac{\lambda}{2}ng_n(x_0)} \times \left( 1+ O\Big(n \Big(\frac{h}{m}\Big)^K\Big) \right)
\eea
because $e^{x} = 1+ O(x)$ for sufficiently small $x$. 
It implies 
\begin{eqnarray*}
\pi_m( x_0, \theta, \lambda \mid {\bf y}_n) &\propto&   L_m(\theta, \lambda, x_0) \pi(\theta,\lambda,x_0)   \\
&=&   L^*(\theta, \lambda, x_0) \pi(\theta,\lambda,x_0) \times \left( 1+ O\Big(n \Big(\frac{h}{m}\Big)^K \Big) \right)  \\
&\propto& \pi(x_0, \theta,\lambda\mid {\bf y}_n) \times \left( 1+ O\Big(n \Big(\frac{h}{m}\Big)^K \Big) \right) 
\end{eqnarray*}
for sufficiently large $n$. If $\alpha > (1+R)/K$, then we have $n(h/m)^K \le n^{-R}$. ~~~ $\blacksquare$
\end{proof}

\newpage
\bibliographystyle{plain}
\bibliography{dynamic_paper}

\end{document}